\documentclass[preprint,11pt]{elsarticle}

\usepackage{amssymb}
\usepackage{amsmath}
\usepackage{hyperref}
\usepackage{tikz}
\usepackage{listings}
\usepackage{pythontex}
\usepackage{amsthm}
\usepackage{longtable}
\usepackage{float}

\newtheorem{Definition}{Definition}[section]

\newtheorem{Proposition}[Definition]{Proposition}

\newtheorem{theorem}{Theorem}[section]

\newtheorem{lemma}[theorem]{Lemma}
\newtheorem{definition}{Definition}

\usepackage{array,multirow,makecell}

\setcellgapes{1pt} \makegapedcells
\newcolumntype{R}[1]{\rangle{\raggedleft\arraybackslash }b{#1}}
\newcolumntype{L}[1]{\rangle{\raggedright\arraybackslash }b{#1}}
\newcolumntype{C}[1]{\rangle{\centering\arraybackslash }b{#1}}
\newcounter{minutes}
\divide\time by 60
\newcounter{hours}
\multiply\time by 60 \addtocounter{minutes}{-\time}

\definecolor{codeblue}{rgb}{0.25, 0.5, 0.75}
\definecolor{codegreen}{rgb}{0, 0.6, 0}
\definecolor{codegray}{rgb}{0.5, 0.5, 0.5}
\definecolor{codepurple}{rgb}{0.58, 0, 0.82}
\definecolor{backcolour}{rgb}{0.95, 0.95, 0.92}

\makeatletter
\def\ps@pprintTitle{%
 \let\@oddhead\@empty
 \let\@evenhead\@empty
 \let\@oddfoot\@empty
 \let\@evenfoot\@oddfoot
} \makeatother

\begin{document}

\begin{frontmatter}

\title{On symbol-pair distance of repeated-root constacyclic codes of length $4p^s$ over  $\mathbb{F}_{p^m}+u\mathbb{F}_{p^m}+u^2\mathbb{F}_{p^m}$}

\author[VB]{Payel Chandra}
\ead{Payelchandra12@gmail.com}

\author[VB]{Kalyan Hansda\corref{cor1}}
\ead{kalyanh4@gmail.com}

\cortext[cor1]{Corresponding author.}

\address[VB]{Department of Mathematics, Visva-Bharati, Santiniketan 731235, West Bengal, India}

\begin{abstract}

This paper completely determines the symbol-pair distance distributions of all repeated-root $\Delta$-constacyclic codes of length $4p^{s}$ over the finite commutative chain ring $R_{3}=\mathbb{F}_{p^{m}}[u]/\langle u^{3}\rangle$, where $p^{m}\equiv1 \pmod 4$. The distance characterization is explicitly classified according to the quadratic character of the shift unit $\Delta \in R_{3}^{*}$. When $\Delta$ is a non-square unit, the exact symbol-pair distances are established across all eight distinct ideal classifications of the ambient ring. Conversely, when $\Delta$ is a square unit, the distance profiles are derived by evaluating direct sum decompositions and local ring reductions. By evaluating the symbol-pair singleton bound, we prove that only the trivial ideal $\mathcal{C}=\langle1\rangle$ achieves maximum distance separability (MDS) , as structural constraints rule out any non-trivial MDS configurations. Finally, computational examples of length 20 over $\mathbb{F}_{5}+u\mathbb{F}_{5}+u^{2}\mathbb{F}_{5}$ are provided to validate the derived distance formulas.
\end{abstract}

\begin{keyword}
Constacyclic codes \sep Symbol-pair distance \sep Chain ring \sep
MDS codes \MSC[2010] 05C25 \sep 05C75 \sep 94B05
\end{keyword}
\end{frontmatter}

\section{Introduction}
Constacyclic codes form a structurally significant class of linear codes owing to their algebraic properties and efficient error-control implementations \cite{Ling2004}. Let $R$ be a finite commutative ring with identity and $\lambda \in R^{*}$ be a unit. A linear code $C$ of length $n$ over $R$ is $\lambda$-constacyclic if it is an ideal of the quotient ring $R[x]/\langle x^n - \lambda \rangle$. When the code length $n$ is divisible by the characteristic $p$ of the residue field, these codes are classified as repeated-root constacyclic codes. Although repeated-root codes are generally asymptotically bad, they possess a rich algebraic architecture and frequently produce optimal or near-optimal parameters.
 
Following the seminal finding that major nonlinear binary codes can be realized as Gray images of linear codes over finite rings, extensive research has focused on codes over finite chain rings. Particular attention has been devoted to polynomial quotient rings of the form $R_k = \mathbb{F}_{p^m}[u]/\langle u^k \rangle$. For $k=2$, Dinh \cite{Dinh2010} completely characterized repeated-root constacyclic codes of length $p^s$ over $\mathbb{F}_{p^m} + u\mathbb{F}_{p^m}$. This foundation was subsequently extended to lengths of $2p^s$ \cite{Chen2016} and $4p^s$ \cite{Dinh4p}. For the chain ring $R_3 = \mathbb{F}_{p^m}[u]/\langle u^3 \rangle$, the classification of repeated-root constacyclic codes of lengths $p^s$ and $2p^s$ was established by Laaouine and Charkani \cite{Laaouine2021} and Sriwirach and Klin-Eam \cite{Sriwirach2021}, respectively. Recently, Laaouine and Dinh \cite{Laaouine2025} determined the algebraic structure of all $\alpha$-constacyclic codes of length $4p^s$ over $R_3$ under the field condition $p^m \equiv 1 \pmod 4$.

Parallel to these algebraic developments, Cassuto and Blaum \cite{Cassuto2011} introduced symbol-pair read channels to mitigate errors in high-density storage systems—such as holographic or high-density magnetic recording media—where the physical resolution of the reading mechanism is lower than the writing resolution. In such channels, isolating individual memory units becomes unfeasible, making the standard Hamming distance non-optimal. The symbol-pair distance $d_{sp}$ addresses this limitation by measuring correctability over overlapping, consecutive symbol pairs, thereby protecting the channel against a bounded number of pair-level corruptions rather than tracking isolated coordinates. 

The fundamental geometric advantage of this metric lies in its relationship to the classical Hamming distance $d_H$; for any distinct codewords, the distance profiles are strictly bounded by $d_H(\mathbf{x}, \mathbf{y}) + 1 \leq d_{sp}(\mathbf{x}, \mathbf{y}) \leq 2d_H(\mathbf{x}, \mathbf{y})$. Consequently, a code with a moderate Hamming distance can yield a substantially larger symbol-pair distance, providing significantly stronger error-detection capability. For instance, a code with $d_H(\mathcal{C}) = 3$ can produce $d_{sp}(\mathcal{C}) = 5$, upgrading its performance from detecting at most $2$ isolated symbol errors to detecting up to $4$ pair-errors. Furthermore, while classical MDS codes governed by the Hamming metric face severe length constraints over fixed finite alphabets, symbol-pair MDS codes can achieve greater block lengths while maintaining strict minimum distance optimality. 

This enhanced capability has motivated the computation of symbol-pair distance profiles across various code families. The symbol-pair distance distributions for repeated-root codes over finite fields have been determined for lengths $p^s$ \cite{Dinh2017}, $2p^s$ \cite{Dinh2019symbol}, and $4p^s$ \cite{Dinh2025symbol}. Over finite chain rings, symbol-pair weights have been established for lengths $2p^s$ over $\mathbb{F}_{p^m} + u\mathbb{F}_{p^m}$ \cite{Dinh2023symbol}, prime-power lengths over $\mathbb{F}_{p^m} + u\mathbb{F}_{p^m}$ \cite{Dinh2019}, and prime-power lengths over $\mathbb{F}_{p^m} + u\mathbb{F}_{p^m}+u^2\mathbb{F}_{p^m}$ \cite{Charkani2021}. 

Building on these frameworks, this paper evaluates the symbol-pair metric for repeated-root codes over $R_3$ of length $4p^s$. Under the constraint $p^m \equiv 1 \pmod 4$, we explicitly compute the symbol-pair distance of all $\Delta$-constacyclic codes of length $4p^s$ over $R_3$. Furthermore, we establish the necessary and sufficient algebraic conditions for these configurations to yield maximum distance separable (MDS) symbol-pair codes in accordance with the symbol-pair Singleton bound.

The rest of this paper is structured as follows. Section 2 introduces the fundamental algebraic properties of the chain ring $R_3$ and symbol-pair metrics. Section 3 details the core structural computation of the symbol-pair distance of $\Delta$-constacyclic codes. Section 4 classifies the parameters required to achieve MDS symbol-pair codes alongside illustrative examples, and Section 5 concludes the paper .

\section{Preliminaries}
Throughout this paper, we will serves $R$ as a finite commutative ring with unity. A \textit{code} $\mathcal{C}$ of length $n$ over $R$ is a non-empty subset of $R^n$. Furthermore, $\mathcal{C}$ is a \textit{linear code} if it forms an $R$-submodule of $R^n$.

Let $\lambda \in R^{\times}$ be an invertible element (unit) of $R$. The $\lambda$-constacyclic shift operator $\tau_{\lambda}: R^n \to R^n$ is defined by 
\[
\tau_{\lambda}(x_0, x_1, \ldots , x_{n-1}) = (\lambda x_{n- 1},x_0,x_1,\ldots,x_{n-2}).
\]
A linear code $\mathcal{C}$ is said to be $\lambda$-\textit{constacyclic} if it is invariant under $\tau_{\lambda}$, i.e., $\tau_{\lambda}(\mathcal{C}) = \mathcal{C}$.

\begin{Proposition}\cite{Ling2004}
A linear code $\mathcal{C}$ of length $n$ over $R$ is $\lambda$-constacyclic if and only if $\mathcal{C}$ is an ideal of the residue class ring $R[x]/\langle x^n - \lambda \rangle$.
\end{Proposition}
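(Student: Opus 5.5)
We identify $R^n$ with the quotient ring $\mathcal{R}_n = R[x]/\langle x^n-\lambda\rangle$ through the $R$-module isomorphism
\[
\phi:(c_0,c_1,\ldots,c_{n-1}) \longmapsto c_0+c_1x+\cdots+c_{n-1}x^{n-1} + \langle x^n-\lambda\rangle .
\]
This is well defined and bijective. Since $x^n-\lambda$ is monic, division by it is possible over any commutative ring, so each coset has a unique representative of degree less than $n$. Under $\phi$, $R$-submodules of $R^n$ correspond exactly to $R$-submodules of $\mathcal{R}_n$. The key computation is that multiplication by $x$ realizes the shift:
\[
x\cdot \phi(c) = c_0x+\cdots+c_{n-2}x^{n-1}+c_{n-1}x^n \equiv \lambda c_{n-1}+c_0x+\cdots+c_{n-2}x^{n-1},
\]
because $x^n\equiv\lambda$. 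Hence $\phi(\tau_\lambda(c)) = x\,\phi(c)$.

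For the forward direction, suppose $\mathcal{C}$ is linear and $\lambda$-constacyclic. Then $\phi(\mathcal{C})$ is an additive subgroup closed under multiplication by scalars $r\in R$ and, by the computation above, under multiplication by $x$. By induction it is closed under multiplication by every power $x^i$. Using additivity and $R$-linearity, it is then closed under multiplication by any polynomial $\sum r_i x^i$, so it is an ideal.

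For the converse, an ideal $I$ of $\mathcal{R}_n$ is closed under multiplication by constants, so $\phi^{-1}(I)$ is an $R$-submodule, i.e.\ a linear code. It is also closed under multiplication by $x$, so it satisfies $\tau_\lambda(\mathcal{C})\subseteq\mathcal{C}$.

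The only subtle point is that the definition requires equality $\tau_\lambda(\mathcal{C})=\mathcal{C}$, not just inclusion. Since $\lambda$ is a unit, $\tau_\lambda$ is injective on $R^n$: its inverse sends $(y_0,\ldots,y_{n-1})$ to $(y_1,\ldots,y_{n-1},\lambda^{-1}y_0)$. As $\mathcal{C}$ is finite, the inclusion $\tau_\lambda(\mathcal{C})\subseteq\mathcal{C}$ together with injectivity forces equality. Alternatively, $x$ is a unit in $\mathcal{R}_n$ with inverse $\lambda^{-1}x^{n-1}$, so $x^{-1}I\subseteq I$ as well, which gives the reverse inclusion. Apart from this point the argument is routine, so no step is a genuine obstacle.
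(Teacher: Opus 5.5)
Your proof is correct and is the standard argument for this result. The paper gives no proof of its own and simply cites Ling--Xing, where the same route is taken: identify $R^n$ with $R[x]/\langle x^n-\lambda\rangle$ and observe that multiplication by $x$ realizes the shift. Your extra point, that an ideal yields $\tau_\lambda(\mathcal{C})=\mathcal{C}$ and not merely an inclusion (via $x$ being a unit with inverse $\lambda^{-1}x^{n-1}$, or via finiteness plus injectivity), correctly covers the equality the paper's definition requires.
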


By convention, if $\mathcal{C}$ is a $\lambda$-constacyclic code over a finite field or a principal ideal ring, it can be identified as a principal ideal $\mathcal{C} = \langle f(x) \rangle$, where $f(x)$ is a monic divisor of $x^n - \lambda$ of minimal degree in $\mathcal{C}$.

\subsection{Algebraic Structure of $\Delta$-Constacyclic Codes when $p^m \equiv 1 \pmod 4$}\label{subsec:struct}
Let $p$ be an odd prime. Laaouine and Dinh \cite{Laaouine2025} completely determined the ideal structures of the ambient ring $R_{\Delta} = R_3[x]/\langle x^{4p^s} - \Delta \rangle$ for codes of length $4p^s$ over $R_3$ under the field condition $p^m \equiv 1 \pmod 4$, partitioned by the quadratic character of the shift unit $\Delta$.

\subsubsection{ \textbf{Case I:} $\Delta$ is a non-square in $\mathbb{F}_{p^m}$}
When $\Delta = \alpha$ is a non-square unit in $\mathbb{F}_{p^m}$, the polynomial $x^4 - \alpha$ is irreducible over $\mathbb{F}_{p^m}$. Let $\alpha_0 \in \mathbb{F}_{p^m}$ satisfy $\alpha_0 = \alpha^{p^s}$. The complete classification of the ideals of $R_{\alpha}$ is summarized below.

\begin{theorem}\cite{Laaouine2025}\label{thm:dinh0}
The $\alpha$-constacyclic codes of length $4p^s$ over $R_3$ are uniquely classified into the following eight distinct types:
\begin{enumerate}
    \item \textbf{Type 1:} $\mathcal{C}_{1} = \langle 0\rangle$ or $\langle 1\rangle$.
    \item \textbf{Type 2:} $\mathcal{C}_{2} = \langle u^2(x^4-\alpha_{0})^l\rangle$, where $0 \leq l \leq p^s-1$.
    \item \textbf{Type 3:} $\mathcal{C}_{3} = \langle u(x^4-\alpha_{0})^l + u^2(x^4-\alpha_{0})^{t}h(x)\rangle$, where $0 \leq L \leq l \leq p^s-1$, $0 \leq t \leq L$, and $h(x)$ is either $0$ or a unit in $R_{\alpha}$ of the form $\sum_{k=0}^{L-t-1}h_{k}(x)(x^4-\alpha_{0})^k$ with $h_k(x) = \sum_{j=0}^{3} h_{j,k}x^j \in \mathbb{F}_{p^m}[x]$ and $h_{0,0} \neq 0$. Here, $L$ is the smallest integer satisfying $u^2(x^4 - \alpha_0)^L \in \mathcal{C}_3$.
    \item \textbf{Type 4:} $\mathcal{C}_{4} = \langle u(x^4-\alpha_{0})^l + u^2(x^4-\alpha_{0})^{t}h(x), u^2(x^4-\alpha_{0})^{\omega} \rangle$, where $p^s > l \geq L > \omega > t \geq 0$, and $h(x)$ is either $0$ or a unit in $R_{\alpha}$ of the form $\sum_{k=0}^{\omega-t-1}h_{k}(x)(x^4-\alpha_{0})^k$ with $h_k(x) = \sum_{j=0}^{3} h_{j,k}x^j \in \mathbb{F}_{p^m}[x]$ and $h_{0,0} \neq 0$. Here, $L$ is defined as in Type 3.
    \item \textbf{Type 5:} $\mathcal{C}_{5} = \langle (x^4 -\alpha_{0})^i + u(x^4 -\alpha_{0})^{t}h_{1}(x) + u^2(x^4 -\alpha_{0})^{z}h_{2}(x)\rangle$, where $0 < V \leq U \leq i \leq p^s -1$, $0 \leq t < U$, $0 \leq z < V$, where $h_1(x), h_2(x)$ are either $0$ or units in $R_{\alpha}$ represented as standard polynomial expansions over $\mathbb{F}_{p^m}[x]$ with non-zero constant terms. Here, $U$ is the smallest integer such that $u(x^4 - \alpha_0)^U + u^2g(x) \in \mathcal{C}_5$ for some $g(x) \in R_{\alpha}$, and $V$ is the smallest integer with $u^2(x^4 - \alpha_0)^V \in \mathcal{C}_5$.
    \item \textbf{Type 6:} $\mathcal{C}_{6} = \langle (x^4 -\alpha_{0})^i + u(x^4 -\alpha_{0})^{t}h_{1}(x) + u^2(x^4 -\alpha_{0})^{z}h_{2}(x), u^2(x^4 -\alpha_{0})^{c}\rangle$, where $0 \leq c < V \leq U \leq i \leq p^s -1$, $0 \leq t < U$, $0 \leq z < c$, and $h_1(x), h_2(x), U, V$ follow the definitions in Type 5.
    \item \textbf{Type 7:} $\mathcal{C}_{7} = \langle (x^4 -\alpha_{0})^i + u(x^4 -\alpha_{0})^{t}h_{1}(x) + u^2(x^4 -\alpha_{0})^{z}h_{2}(x), u(x^4 -\alpha_{0})^{b} + u^2(x^4 -\alpha_{0})^{j}h_{3}(x)\rangle$, where $0 \leq W \leq b < U \leq i \leq p^s -1$, $0 \leq t < b$, $0 \leq z < W$, $0 \leq j < W$, and $h_3(x)$ is either $0$ or a unit matching the polynomial framework. Here, $W$ is the smallest integer satisfying $u^2(x^4 - \alpha_0)^W \in \mathcal{C}_7$.
    \item \textbf{Type 8:} $\mathcal{C}_{8} = \langle (x^4 -\alpha_{0})^i + u(x^4 -\alpha_{0})^{t}h_{1}(x) + u^2(x^4 -\alpha_{0})^{z}h_{2}(x), u(x^4 -\alpha_{0})^{b} + u^2(x^4 -\alpha_{0})^{j}h_{3}(x), u^2(x^4 -\alpha_{0})^{c} \rangle$, where $0 \leq c < W \leq L_{1} \leq b < U \leq i \leq p^s -1$, $0 \leq t < b$, $0 \leq z < c$, $0 \leq j < c$, and $L_1$ is the minimum integer satisfying $u^2(x^4 - \alpha_0)^{L_1} \in \langle u(x^4 -\alpha_{0})^{b} + u^2(x^4 -\alpha_{0})^{j}h_{3}(x) \rangle$.
\end{enumerate}
\end{theorem}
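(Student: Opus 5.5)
The classification will follow from recognizing $\mathcal{R}_\alpha=R_3[x]/\langle x^{4p^s}-\alpha\rangle$ as a local ring and classifying its ideals through the chain of quotients by powers of $u$. Set $\mathcal{A}=\mathbb{F}_{p^m}[x]/\langle x^{4p^s}-\alpha\rangle$. Since $\alpha_0^{p^s}=\alpha$ (using $\alpha=\alpha^{p^{ms}}$ and the appropriate choice of $\alpha_0$ in $\mathbb{F}_{p^m}$), we have $x^{4p^s}-\alpha=(x^4-\alpha_0)^{p^s}$. The residue $\alpha_0$ is a non-square because $\alpha$ is one and $p^s$ is odd. With $p^m\equiv 1\pmod 4$, this makes $x^4-\alpha_0$ irreducible: a root $\beta$ would satisfy $\beta^4=\alpha_0$, and $\alpha_0$ being a non-square forces $[\mathbb{F}_{p^m}(\beta):\mathbb{F}_{p^m}]=4$ by Kummer theory, since $\mathbb{F}_{p^m}$ contains the 4th roots of unity. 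It follows that $\mathcal{A}$ is a chain ring with maximal ideal $\langle x^4-\alpha_0\rangle$, nilpotency index $p^s$, and residue field $\mathbb{F}_{p^{4m}}$. Every element of $\mathcal{A}$ then has a unique $(x^4-\alpha_0)$-adic expansion $\sum_k h_k(x)(x^4-\alpha_0)^k$ with $\deg h_k\le 3$, and this is the normal form used for the $h$'s in the statement. Consequently $\mathcal{R}_\alpha=\mathcal{A}+u\mathcal{A}+u^2\mathcal{A}$ is local with maximal ideal $\langle u,x^4-\alpha_0\rangle$.

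Given an ideal $\mathcal{C}$, I attach the invariants $\mathrm{Res}(\mathcal{C})=\{a\in\mathcal{A}: a+ub\in\mathcal{C}\}$, $\mathrm{Tor}_1(\mathcal{C})=\{b: ub+u^2c\in\mathcal{C}\}$ and $\mathrm{Tor}_2(\mathcal{C})=\{c: u^2c\in\mathcal{C}\}$. These form an increasing chain of ideals of $\mathcal{A}$, so they are $\langle(x^4-\alpha_0)^i\rangle$, $\langle(x^4-\alpha_0)^U\rangle$ (or $b$) and $\langle(x^4-\alpha_0)^V\rangle$ (or $c$, $W$, $L$), each of which may also be $0$ or all of $\mathcal{A}$. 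The eight types then arise from a case split. If $\mathrm{Res}=0$ and $\mathrm{Tor}_1=0$, we get Type 1 or Type 2. If $\mathrm{Res}=0$ and $\mathrm{Tor}_1\ne0$, we get Type 3 or Type 4, according to whether $\mathrm{Tor}_2$ is generated by the $u^2$-part forced by $\mathrm{Tor}_1$ or needs an extra generator. If $\mathrm{Res}\neq 0$, we get Types 5 through 8, according to whether $\mathrm{Tor}_1$ and $\mathrm{Tor}_2$ are already generated by multiples of the first generator or need additional $u$- and $u^2$-generators. In each case I choose generators with leading terms $(x^4-\alpha_0)^i$, $u(x^4-\alpha_0)^b$ and $u^2(x^4-\alpha_0)^c$ as in the statement.

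The next step is to reduce the tails. Any $u$-component $u(x^4-\alpha_0)^t h_1$ may be modified by elements of $u\cdot\mathrm{Tor}_1$. This lets me truncate its expansion below degree $U$ (resp. $b$), and similarly for the $u^2$-components below $V$, $c$ or $W$. This truncation yields the uniqueness of $h,h_1,h_2,h_3$ and the inequalities $t<U$, $z<V$, and so on. The constraints $L\le l$ and $V\le U\le i$ come from the relations obtained by multiplying generators by $u$ or by powers of $(x^4-\alpha_0)$. For example, $u\cdot[(x^4-\alpha_0)^i+\dots]\in\mathcal{C}$ gives $U\le i$, and $(x^4-\alpha_0)^{p^s-l}\cdot[u(x^4-\alpha_0)^l+u^2(x^4-\alpha_0)^th]=u^2(x^4-\alpha_0)^{p^s-l+t}h$ controls $L$.

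The main obstacle is this normalization, i.e. proving that distinct parameter tuples give distinct ideals. It also requires computing $U$, $V$, $W$, $L$, $L_1$ explicitly enough that the stated inequalities are exactly the consistency conditions. For this I would follow the method used for length $p^s$ over $R_3$ in \cite{Laaouine2021}, transported verbatim to the chain ring $\mathcal{A}$ with uniformizer $x^4-\alpha_0$, since only the chain-ring structure of $\mathcal{A}$ is used. Equivalently, the whole statement is the result of \cite{Laaouine2025}, which we may simply invoke.
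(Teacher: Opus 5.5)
Your proposal is correct and follows the paper. The paper gives no proof of its own and simply imports this classification from \cite{Laaouine2025}, so your closing step of invoking that result is exactly the paper's route; your residue/torsion outline, built on the chain ring $\mathcal{A}=\mathbb{F}_{p^m}[x]/\langle (x^4-\alpha_0)^{p^s}\rangle$ with the normalization $\alpha_0^{p^s}=\alpha$, is the standard argument behind that reference. One bookkeeping slip: $\langle 1\rangle$ arises from $\mathrm{Res}(\mathcal{C})=\mathcal{A}$ (i.e.\ $i=0$), not from the branch $\mathrm{Res}=\mathrm{Tor}_1=0$, and it cannot fall under Types 5--8 either, since all of those force $i\ge 1$.
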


\subsubsection{ \textbf{Case II:} $\Delta$ is a square in $\mathbb{F}_{p^m}$}
If $\Delta = \delta^2$ for some $\delta \in \mathbb{F}_{p^m}^{*}$, the algebraic structure splits by the Chinese Remainder Theorem:
\[
R_\Delta = \frac{R_{3}[x]}{\langle x^{4p^s} - \Delta \rangle} \cong \frac{R_{3}[x]}{\langle x^{2p^s} + \delta \rangle} \oplus \frac{R_{3}[x]}{\langle x^{2p^s} - \delta \rangle}.
\]
Consequently, any $\Delta$-constacyclic code over $R_3$ can be decomposed as a direct sum $\mathcal{C} = \mathcal{C}^+ \oplus \mathcal{C}^-$, where $\mathcal{C}^+$ and $\mathcal{C}^-$ are $-\delta$-constacyclic and $\delta$-constacyclic codes of length $2p^s$ over $R_3$, respectively. Sriwirach and Klin-Eam \cite{Sriwirach2021} classified the codes of length $2p^s$ over $R_3$ based on the decomposition of $\delta$:

\begin{itemize}
    \item \textbf{Case 1:} If $\delta = \alpha^2$ for some $\alpha \in \mathbb{F}_{p^m}^{*}$, a secondary splitting yields:
    \[
    \frac{R_{3}[x]}{\langle x^{2p^s} - \delta \rangle} \cong \frac{R_{3}[x]}{\langle x^{p^s} + \alpha \rangle} \oplus \frac{R_{3}[x]}{\langle x^{p^s} - \alpha \rangle}.
    \]
    Thus, $\mathcal{C}$ maps to a direct sum of $-\alpha$ and $\alpha$-constacyclic codes of length $p^s$ over $R_3$.
    \item \textbf{Case 2:} If $\delta = \alpha + u\beta$ with $\alpha \in \mathbb{F}_{p^m}^{*}, \beta \in \mathbb{F}_{p^m}$, the component ring is a local chain ring with maximal ideal $\langle x^2 - \alpha_0 \rangle$\cite{Sriwirach2021}. Its unique ideals form the chain:
    \[
    \langle 1 \rangle \supset \langle x^2 - \alpha_0 \rangle \supset \dots \supset \langle (x^2 - \alpha_0)^{3p^s-1} \rangle \supset \langle (x^2 - \alpha_0)^{3p^s} \rangle = \{0\}.
    \]
    \item \textbf{Case 3:} If $\delta = \alpha + u\beta + u^2\gamma$ with $\alpha \in \mathbb{F}_{p^m}^{*}, \beta, \gamma \in \mathbb{F}_{p^m}$, the structural ideal chain behaves identically to Case 2 under the maximal ideal $\langle x^2 - \alpha_0 \rangle$\cite{Sriwirach2021}.
    \item \textbf{Case 4:} If $\delta = \alpha \in \mathbb{F}_{p^m}^{*}$ is a non-square, the component ring is non-chain\cite{Sriwirach2021}. The nilpotent element $(x^2 - \alpha_0)$ has index $p^s$, and the maximal ideal is $\langle u, x^2 - \alpha_0 \rangle$. The corresponding $\alpha$-constacyclic codes map into eight structural categories analogous to Theorem 2.1 with local reduction to a polynomial base of degree 2.
\end{itemize}

\subsection{Symbol-Pair Metric Properties}\label{subsec:formula}

Cassuto and Blaum \cite{Cassuto2011} introduced symbol-pair codes to protect against errors in channels whose outputs are read as consecutive, overlapping symbol pairs. Let $E$ be a code alphabet of size $q$, whose elements represent the individual code symbols. For a vector $\mathbf{v} = (v_0, v_1, \ldots, v_{n-1}) \in E^n$ of length $n$, the corresponding symbol-pair read vector $\pi(\mathbf{v})$ is defined as:
\[
\pi(\mathbf{v}) = \big((v_0,v_1), (v_1,v_2), \ldots, (v_{n-1},v_0)\big) \in (E^2)^n,
\]
where the index additions are performed modulo $n$. 

For any two given vectors $\mathbf{v}=(v_0,v_1\ldots,v_n), \mathbf{t}=(t_0,t_1\ldots,t_n) \in E^n$, the symbol-pair distance $d_{sp}(\mathbf{v}, \mathbf{t})$ is defined as 
$d_{sp}(\mathbf{v}, \mathbf{t}) = d_H\big(\pi(\mathbf{v}), \pi(\mathbf{t})\big) = \big| \{i : (v_i, v_{i+1}) \neq (t_i, t_{i+1})\} \big|$.
Accordingly, the minimum symbol-pair distance of a code $\mathcal{C}$ is defined as:
$d_{sp}(\mathcal{C}) = \min_{\mathbf{v}, \mathbf{t} \in \mathcal{C},\, \mathbf{v} \neq \mathbf{t}} \{ d_{sp}(\mathbf{v}, \mathbf{t}) \}$. The symbol-pair weight $wt_{sp}(\mathbf{v})$ of a vector $\mathbf{v}$ is determined by the size of its pair support:
$wt_{sp}(\mathbf{v}) = \big| \{i : (v_i, v_{i+1}) \neq (0,0),\, 0 \leq i \leq n-1\} \big|$.
If $\mathcal{C}$ is a linear code over $E$, the  symbol-pair distance is given by
$d_{sp}(\mathcal{C}) = \min \{ wt_{sp}(\mathbf{v}) : \mathbf{v} \neq \mathbf{0},\, \mathbf{v} \in \mathcal{C} \}$.

The tracking parameter for checking pairs of entries across channels relies on the following structural lower bounds for distances established in recent finite field results.
The classification of all constacyclic codes of length $4p^s$ over the finite field $\mathbb{F}_{p^{m}}$ was established by Dinh in \cite{Laaouine2025}. Under the condition $p^m \equiv 1 \pmod{4}$, the generator polynomial structures of these constacyclic codes are systematically summarized in Table 1.

\begin{table}[H]
\centering
\caption{Classification of constacyclic codes of length $4p^{s}$ over $\mathbb{F}_{p^{m}}$ ($p^{m}\equiv 1 \pmod 4$)}
\vspace{0.2cm}
\renewcommand{\arraystretch}{1.4}
\small
\begin{tabular}{|l|c|l|}
\hline
\textbf{Classification} & \textbf{Shift Parameter ($\lambda$)} & \textbf{Generator Polynomial Structure ($C = \langle g(x) \rangle$)} \\ \hline
Case 1 & $\xi\theta_1^{4}$ & $\left\langle \left(x^{4}-\xi^{p^{m-\nu}}\right)^{j}\right\rangle$ \\ \hline
Case 2 & $\xi^{3}\theta_2^{4}$ & $\left\langle \left(x^{4}-\xi^{3p^{m-\nu}}\right)^{j}\right\rangle$ \\ \hline
Case 3 & $\xi^{2}\theta_3^{4}$ & $\left\langle \left(x^{2}-\xi^{p^{m-\nu}}\right)^{i} \left(x^{2}+\xi^{p^{m-\nu}}\right)^{j} \right\rangle$ \\ \hline
Case 4 & $\theta_4^{4}$ & $\left\langle (x+1)^{j}(x-1)^{i}(x+\gamma)^{l}(x-\gamma)^{k} \right\rangle$ \\ \hline
\end{tabular}
\label{tab:constacyclic}

\end{table}
where $\theta_1, \theta_2, \theta_3, \theta_4 \in \mathbb{F}_{p^{m}}$, $\gamma^{2}=-1$, $s \equiv \nu \pmod m$, $0 \le i, j, k, l \le p^{s}$, and $\xi$ is a primitive element of $\mathbb{F}_{p^{m}}$.

\begin{theorem}\cite{Dinh2025symbol}\label{thm:dinh1}
Let $\mathcal{C} = \langle (x^4 -\alpha_{0})^j \rangle \subseteq \mathbb{F}_{p^m}[x]/\langle x^{4p^s}-\alpha_{0} \rangle$ be an $\alpha$-constacyclic code of length $4p^s$, corresponding to Case 1 or Case 2 in Table 1, where $0 \leq j \leq p^s$. Then, the symbol-pair distance $d_{sp}(\mathcal{C})$ is determined explicitly by:
\begin{equation}
d_{sp}(\mathcal{C}) = 
\begin{cases}
    2, & \text{if } j = 0, \\
    2(\beta+2)p^\tau, & \text{if } p^s-p^{s-\tau} +(\beta-1) p^{s-\tau -1} + 1 \leq j \leq p^s-p^{s-\tau} +\beta p^{s-\tau -1}, \\
    0, & \text{if } j = p^s,
\end{cases}
\end{equation}
where $0 \leq \tau \leq s-1$ and $1 \leq \beta \leq p-1$.
\end{theorem}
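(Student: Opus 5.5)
The plan is to reduce the length-$4p^s$ code to codes of length $p^s$ through the substitution $y=x^4$, and to handle the trivial cases separately.

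\textbf{Trivial cases.} For $j=0$ the code is the whole ambient space. A vector of Hamming weight one has exactly two nonzero consecutive pairs, and no nonzero vector can have pair weight $1$ when $n\ge 2$. So $d_{sp}=2$. For $j=p^s$ we have $(x^4-\alpha_0)^{p^s}=x^{4p^s}-\alpha_0^{p^s}=0$ in the ambient ring, so the code is $\{0\}$ and the convention $d_{sp}=0$ applies.

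\textbf{Reduction to length $p^s$.} Assume $0<j<p^s$. Every element of $\mathbb{F}_{p^m}[x]/\langle x^{4p^s}-\alpha_0\rangle$ can be written uniquely as $c(x)=\sum_{r=0}^{3}x^r c_r(x^4)$ with $\deg c_r<p^s$. Since $(x^4-\alpha_0)^j$ is a polynomial in $x^4$, I would show that $c\in\mathcal{C}$ if and only if every $c_r(y)$ lies in the length-$p^s$ constacyclic code $D_j=\langle (y-\alpha_0)^j\rangle\subseteq \mathbb{F}_{p^m}[y]/\langle y^{p^s}-\alpha_0\rangle$. The "if" direction is immediate. For "only if", compare components in $\mathcal{C}=(x^4-\alpha_0)^j\cdot(\text{ambient ring})$. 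Hence $\mathcal{C}\cong D_j^{\oplus 4}$ as an interleaving: coordinate $4k+r$ of $c$ equals coordinate $k$ of $c_r$.

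For $D_j$ I would invoke Dinh's Hamming distances for repeated-root constacyclic codes of length $p^s$. In the range $p^s-p^{s-\tau}+(\beta-1)p^{s-\tau-1}+1\le j\le p^s-p^{s-\tau}+\beta p^{s-\tau-1}$ these are expressed as $(\beta+1)p^\tau$. I would also invoke the symbol-pair distance of $D_j$ from \cite{Dinh2017}. I would check carefully which of these quantities the factor $(\beta+2)$ in the target formula matches, since this indexing of $\beta$ determines the final constant.

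\textbf{Upper bound.} Take a minimum-weight codeword $a(y)\in D_j$, such as the explicit $(y-\alpha_0)^{j'}$-type words used in Dinh's proofs, and place it in a single component: $c(x)=a(x^4)$. Its support lies in the positions $\equiv 0\pmod 4$, and these are pairwise at distance at least $4$. Therefore each nonzero coordinate $c_i$ produces two distinct nonzero pairs, $(c_{i-1},c_i)$ and $(c_i,c_{i+1})$, and no pair is shared between two support points. This gives $wt_{sp}(c)=2\,wt_H(a)$, which yields the stated value $2(\beta+2)p^\tau$ once the weight of $a$ is matched to the range of $j$.

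\textbf{Lower bound (main obstacle).} For arbitrary nonzero $c$, write $S=\mathrm{supp}(c)$. Then $wt_{sp}(c)=|S\cup(S-1)|$, which equals $|S|$ plus the number of maximal runs of consecutive positions in $S$. If exactly one component $c_r$ is nonzero, the spacing argument gives $wt_{sp}(c)\ge 2d_H(D_j)$ directly. The difficulty arises when several components are nonzero: runs can merge across residues mod $4$, and $wt_{sp}$ may drop below $2|S|$. Here I would argue as follows:
\begin{itemize}
\item If $k\ge 2$ components are nonzero, then $|S|\ge k\,d_H(D_j)$.
\item The number of runs is at least $\lceil |S|/4\rceil$, together with a sharper count exploiting the distribution of supports of the $c_r$ across blocks.
\end{itemize}
When $k\ge 2$, the inequality $|S|+\#\text{runs}\ge 2d_H(D_j)$ already follows from $|S|\ge 2d_H(D_j)$. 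The case $k=1$ is the spacing argument above. The delicate step is therefore only to confirm this case split and the exact constant in the Hamming distance of $D_j$ over each $\tau,\beta$ range. In particular, I must check whether the Hamming minimum is $(\beta+1)p^\tau$ or $(\beta+2)p^\tau$ for this indexing of $\beta$. That check is where I expect the bookkeeping to be hardest.
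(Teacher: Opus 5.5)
The paper gives no proof of this statement; it is quoted from \cite{Dinh2025symbol}, so your proposal has to stand on its own. The skeleton is sound. You have $\mathcal{C}=\bigoplus_{r=0}^{3}x^rD_j$, reading the modulus as $x^{4p^s}-\alpha_0^{p^s}$ as you tacitly do. A minimum-weight word of $D_j$ placed in one residue class mod $4$ has $wt_{sp}=2\,wt_H$. If $k\ge 2$ components are nonzero, then $wt_{sp}(c)\ge |S|\ge k\,d_H(D_j)\ge 2d_H(D_j)$. So you have in fact proved $d_{sp}(\mathcal{C})=2d_H(D_j)$ for $0<j<p^s$. Your bullet claiming $\#\text{runs}\ge\lceil |S|/4\rceil$ is false in general, since a support can be a single long run. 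You never use it, so drop it.

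The gap is the step you postponed, and it cannot be closed the way you assert. You yourself quoted the correct Hamming distance for this indexing, $d_H(D_j)=(\beta+1)p^\tau$. As a check, for $s=1$ the range forces $j=\beta$, and $\langle (y-\alpha_0)^j\rangle$ of length $p$ is MDS with $d_H=j+1$. Your own argument therefore gives $d_{sp}(\mathcal{C})=2(\beta+1)p^\tau$, not $2(\beta+2)p^\tau$. Your sentence that the construction ``yields the stated value $2(\beta+2)p^\tau$'' is false.

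Concretely, take $p=5$, $s=1$, $j=1$, so $\tau=0$ and $\beta=1$. The codeword $x^4-\alpha_0$ has nonzero pairs only at indices $19,0,3,4$, so $wt_{sp}=4<6$. The statement as printed is therefore false, and the $(\beta+2)$ is a typo. The paper itself uses $d_{sp}=2(\beta+1)p^\tau$ whenever it applies this theorem in Section~4, and its example table lists $d_{sp}(\langle u^2(x^4-\alpha_0)\rangle)=4$. The $\beta+2$ probably leaks in from Dinh's alternative parametrisation $\beta p^{s-1}+1\le j\le(\beta+1)p^{s-1}$, $0\le\beta\le p-2$, where $d_H=\beta+2$.

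You should finish the check rather than forcing agreement with the target. Say explicitly that the formula as written fails, and present your interleaving argument as a complete proof of the corrected value $2(\beta+1)p^\tau$.
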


\begin{theorem}\cite{Dinh2025symbol}\label{thm:dinh2}
Let $\mathcal{C} = \langle (x^2 -\alpha_{1})^{i}(x^2 +\alpha_{2})^j \rangle \subseteq \mathbb{F}_{p^m}[x]/\langle x^{4p^s}-\alpha_{0} \rangle$ be an $\alpha$-constacyclic code of length $4p^s$ matching Case 3 in Table 1, where $0 \leq i \leq j \leq p^s$. The symbol-pair distance $d_{sp}(\mathcal{C})$ evaluates to:
\begin{equation}
d_{sp}(\mathcal{C}) = 
\begin{cases}
    2, & \text{if } i = j = 0, \\
    4, & \text{if } i = 0 \text{ and } 0 < j \leq p^s, \\
    \min \{ 2(\beta_{1}+1)p^{\tau_1}, \, 4(\beta_{2}+1)p^{\tau_2}\}, & \text{if } \Lambda_1(i, \tau_1, \beta_1) \text{ and } \Lambda_2(j, \tau_2, \beta_2) \text{ hold}, \\
    4(\beta_{1}+1)p^{\tau_1}, & \text{if } \Lambda_1(i, \tau_1, \beta_1) \text{ holds and } j = p^s, \\
    0, & \text{if } i = j = p^s,
\end{cases}
\end{equation}
where $\Lambda_k(m, \tau_k, \beta_k)$ denotes the precise integer partition constraints over the intervals:
\[
p^s-p^{s-\tau_k} + (\beta_k-1)p^{s-\tau_k-1}+1 \leq m \leq p^s-p^{s-\tau_k} + \beta_k p^{s-\tau_k-1},
\]
for parameters $0 \leq \tau_1 \leq \tau_2 \leq s-1$ and $1 \leq \beta_1, \beta_2 \leq p-1$.
\end{theorem}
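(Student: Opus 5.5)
The approach is to reduce the statement to known Hamming-distance results for constacyclic codes of length $2p^s$ (and $p^s$) over $\mathbb{F}_{p^m}$ by separating even and odd coordinates, then to convert Hamming information into symbol-pair information. Throughout, let $\lambda=\xi^2\theta_3^4$ be the shift of Case 3 in Table~\ref{tab:constacyclic}. Then $x^{4p^s}-\lambda=(x^2-\alpha_1)^{p^s}(x^2+\alpha_2)^{p^s}$ with $\alpha_1=\alpha_2=\xi^{p^{m-\nu}}\theta_3^{2p^{m-\nu}}$, and both quadratic factors are irreducible over $\mathbb{F}_{p^m}$. Write every $f(x)\in\mathbb{F}_{p^m}[x]/\langle x^{4p^s}-\lambda\rangle$ as $f(x)=f_0(x^2)+xf_1(x^2)$. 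Putting $y=x^2$, we have $f\in\mathcal{C}$ if and only if both $f_0(y)$ and $f_1(y)$ lie in the ideal $\langle (y-\alpha_1)^i(y+\alpha_2)^j\rangle$ of $\mathbb{F}_{p^m}[y]/\langle y^{2p^s}-\lambda\rangle$. This holds because $(x^2-\alpha_1)^i(x^2+\alpha_2)^j$ is a polynomial in $x^2$, so divisibility can be checked separately on the even and odd parts. That ideal is a constacyclic code of length $2p^s$ whose roots split into the two distinct linear factors $y\mp\alpha_1$. Its Hamming distance is therefore available from the length-$2p^s$ results (equivalently, the $\min$ of the two $p^s$-length repeated-root distances). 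This is the source of the quantities $(\beta_k+1)p^{\tau_k}$ under the constraints $\Lambda_k$.

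The boundary cases come first. If $i=j=0$ then $\mathcal{C}$ is the whole space, and the weight-one word $1$ has $wt_{sp}=2$. This is the minimum possible for a nonzero vector, giving $d_{sp}=2$. If $i=j=p^s$ then $\mathcal{C}=\{0\}$ and $d_{sp}=0$ by convention.

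For $i=0<j$ the upper bound is witnessed by the codeword $x^2+\alpha_2$, whose two nonzero entries are non-adjacent, so $wt_{sp}=4$. For the lower bound $wt_{sp}\ge 4$ I would use the Hamming weight of a nonzero codeword $c$:
\begin{itemize}
\item Hamming weight $1$ is impossible, since monomials are units and $\mathcal{C}$ is a proper ideal.
\item Hamming weight $2$ with adjacent support means $c=x^k(x-\epsilon)$. This would force $x^2+\alpha_2\mid x-\epsilon$, which is impossible by degree. So every weight-$2$ codeword has non-adjacent support and hence $wt_{sp}=4$.
\item Hamming weight at least $3$ forces $wt_{sp}\ge 4$ by the bound $wt_{sp}\ge wt_H+1$.
\end{itemize}
The same two-term-word argument handles $j=p^s$: there the even and odd parts each lie in a code of Hamming distance $2(\beta_1+1)p^{\tau_1}$ in the $y$-variable. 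Interleaving at least doubles the symbol-pair weight, which gives the lower bound $4(\beta_1+1)p^{\tau_1}$. Equality comes from taking $f_1=0$ and $f_0$ a minimum-weight word, since then $wt_{sp}(f)=2\,wt_H(f_0)$.

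For the mixed case under $\Lambda_1$ and $\Lambda_2$, the upper bound uses two explicit candidates:
\begin{itemize}
\item a word of the form $(x^2+\alpha_2)^{p^s}g(x)$ lying in the component where the $(x^2+\alpha_2)$-part is killed, realizing $2(\beta_1+1)p^{\tau_1}$ via Theorem~\ref{thm:dinh1}-type minimum words;
\item an even-only word $f_0(x^2)$ of $y$-Hamming weight $2(\beta_2+1)p^{\tau_2}$, realizing $4(\beta_2+1)p^{\tau_2}$.
\end{itemize}

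The main obstacle is the matching lower bound. Here I would take a nonzero codeword and split on whether one of $f_0,f_1$ vanishes. If one vanishes, the symbol-pair weight is exactly twice the Hamming weight of the other part, and we use the length-$2p^s$ distance. If both are nonzero, I would project onto the $(x^2-\alpha_1)$-primary component, using that $\tau_1\le\tau_2$ and the monotonicity of the intervals. The aim is to show the codeword already lies in a code of type Theorem~\ref{thm:dinh1} with exponent $i$, whose symbol-pair distance is $2(\beta_1+1)p^{\tau_1}$. I would try first to carry this out with the Dinh-style argument: write $c=(x^{2p^{s-\tau_1}}-\cdot)^{\ldots}$ and bound the number of nonzero pair-windows blockwise modulo $p^{s-\tau_1}$.
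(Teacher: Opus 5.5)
\textbf{The mixed case cannot be completed.} The third line of the statement is false as written; the paper only quotes it and gives no proof. Your argument breaks exactly at its upper-bound witness.

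The word $(x^2+\alpha_2)^{p^s}g(x)\in\mathcal{C}$ lies in $\langle (x^2-\alpha_1)^i(x^2+\alpha_2)^{p^s}\rangle$. That is precisely the $j=p^s$ code, for which you yourself derive $d_{sp}=4(\beta_1+1)p^{\tau_1}$. Concretely, the word equals $(x^{2p^s}+\alpha_2^{p^s})h(x)$ with $\deg h<2p^s$, i.e.\ two scaled copies of $h$. So its symbol-pair weight is twice the cyclic symbol-pair weight of a nonzero word of $\langle (x^2-\alpha_1)^i\rangle$ of length $2p^s$, hence at least $4(\beta_1+1)p^{\tau_1}$. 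It can never realize $2(\beta_1+1)p^{\tau_1}$.

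Note also that $i\le j$ forces $(\beta_1+1)p^{\tau_1}\le(\beta_2+1)p^{\tau_2}$. So the stated minimum always equals its first entry, which already signals a garbled formula.

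Your claim that $D=\langle (y-\alpha_1)^i(y+\alpha_1)^j\rangle$ has Hamming distance equal to the minimum of the two length-$p^s$ distances is also wrong. Write $c=c_0+y^{p^s}c_1$ and $b=\alpha_1^{p^s}$. The CRT map $c\mapsto(u,v)=(c_0+bc_1,\,c_0-bc_1)$ shows $D$ is a $(u+v\mid u-v)$ code with $u\in\langle(y-\alpha_1)^i\rangle$ and $v\in\langle(y+\alpha_1)^j\rangle$ of length $p^s$. Hence $d_H(D)=\min\{2(\beta_1+1)p^{\tau_1},(\beta_2+1)p^{\tau_2}\}$. The second value is attained by even-only words $(y^2-\alpha_1^2)^jq(y^2)$.

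\textbf{Your even/odd split already settles everything,} with no projection onto primary components. Let $A_r=\mathrm{supp}(f_r)\subseteq\mathbb{Z}_{2p^s}$. The pair starting at position $2k$ is nonzero iff $k\in A_0\cup A_1$, and the pair starting at $2k+1$ is nonzero iff $k\in A_1\cup(A_0-1)$. So any nonzero $f_r$ forces at least $|A_r|\ge d_H(D)$ nonzero pairs of each parity, and $f_1=0$ attains this bound.

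Hence, for $0<i\le j<p^s$,
$$d_{sp}(\mathcal{C})=2d_H(D)=\min\{4(\beta_1+1)p^{\tau_1},\,2(\beta_2+1)p^{\tau_2}\}.$$
This is the stated formula with the coefficients $2$ and $4$ interchanged. It agrees with the statement only when $i$ and $j$ lie in the same $\Lambda$-interval.

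For example, take $p=3$, $m=2$, $s=1$, $i=1$, $j=2$; the statement predicts $4$. But $D$ has no nonzero word of weight at most $2$. Monomials are units, and a binomial $y^l-e$ with $1\le l\le 5$ that is divisible by $(y+\alpha_1)^2$ needs a repeated root. That forces $l=3$, where $y^3-e$ is the cube of a linear polynomial and cannot be divisible by $(y-\alpha_1)(y+\alpha_1)$. Meanwhile $(y^2-\alpha_1^2)^2=y^4+\alpha_1^2y^2+\alpha_1^4$ has weight $3$. So $d_{sp}(\mathcal{C})=6$.

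The other four lines of the statement are correct, and your arguments for them are sound, with one fix. For $i=0<j$, the witness $x^2+\alpha_2$ lies in $\mathcal{C}$ only when $j=1$; use $(x^2+\alpha_2)^{p^s}=x^{2p^s}+\alpha_2^{p^s}$ instead.
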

\begin{lemma}\cite{Dinh2023symbol}\label{lem:direct_sum}
Let $A = [n_1, M_1, d_{sp}(A)]$ and $B = [n_2, M_2, d_{sp}(B)]$ be two non-zero linear symbol-pair codes over a ring $R$. Their direct sum is defined as:
\[
A \oplus B = \{(a, b) \mid a \in A, \, b \in B\},
\]
which forms an $[n_1 + n_2, M_1 M_2, \min\{d_{sp}(A), d_{sp}(B)\}]$ symbol-pair code over $R$.
\end{lemma}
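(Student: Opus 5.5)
The statement to prove is Lemma~\ref{lem:direct_sum}. Let $A$ have length $n_1$ and $B$ length $n_2$. The parameters split into three claims: the length, the size, and the minimum symbol-pair distance of $A\oplus B$.

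\textbf{Length, size and linearity.} These come first and are routine. The set $A\oplus B=\{(a,b)\}$ lies in $R^{n_1+n_2}$. It is an $R$-submodule because both $A$ and $B$ are. The map $(a,b)\mapsto (a,b)$ is a bijection $A\times B\to A\oplus B$, so $|A\oplus B|=M_1M_2$.

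\textbf{Reduction to weights.} Because $A\oplus B$ is linear, I will use the formula from Subsection~\ref{subsec:formula}:
\[
d_{sp}(A\oplus B)=\min\{wt_{sp}(\mathbf{v}):\mathbf{0}\neq\mathbf{v}\in A\oplus B\}.
\]
The main point is to compare the pair weight of the concatenation $\mathbf{v}=(a,b)$ with $wt_{sp}(a)$ and $wt_{sp}(b)$. Each of those is computed cyclically on its own block.

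\textbf{The obstacle: the two junction pairs.} The read vector $\pi(\mathbf{v})$ differs from the concatenation of $\pi(a)$ and $\pi(b)$ in exactly two positions. The internal pairs $(a_i,a_{i+1})$ for $i<n_1-1$ and $(b_i,b_{i+1})$ for $i<n_2-1$ agree. The cyclic pairs $(a_{n_1-1},a_0)$ and $(b_{n_2-1},b_0)$ are replaced by the junction pairs $(a_{n_1-1},b_0)$ and $(b_{n_2-1},a_0)$. So the weights need not simply add, and I handle this by cases.

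\textbf{Case $b=0$, $a\neq 0$.} The nonzero pairs of $\mathbf{v}$ are the nonzero internal pairs of $a$, plus $(a_{n_1-1},0)$ if $a_{n_1-1}\neq0$, plus $(0,a_0)$ if $a_0\neq0$. The cyclic pair $(a_{n_1-1},a_0)$ is nonzero exactly when at least one of these two junction pairs is nonzero. Hence $wt_{sp}(\mathbf{v})\ge wt_{sp}(a)\ge d_{sp}(A)$. This is where care is needed: the junction step must never lose weight.

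\textbf{Case $a=0$, $b\neq0$.} This is symmetric and gives $wt_{sp}(\mathbf{v})\ge d_{sp}(B)$.

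\textbf{Case $a\neq0$, $b\neq0$.} Each block contributes at least its own pair weight minus at most one lost cyclic pair. The two junction pairs compensate, since $(a_{n_1-1},b_0)\neq0$ whenever $a_{n_1-1}\neq0$ or $b_0\neq0$. Altogether $wt_{sp}(\mathbf{v})\ge wt_{sp}(a)+wt_{sp}(b)-2+(\text{junction contributions})$. A cleaner route is to restrict to one block: the pairs with at least one coordinate in the $a$-block already number at least $wt_{sp}(a)$, by the same argument as the first case. Either way, $wt_{sp}(\mathbf{v})\ge\min\{d_{sp}(A),d_{sp}(B)\}$.

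\textbf{Achieving the minimum.} The bound is attained by $(a,0)$, where $a$ is a minimum-weight word of $A$. By the first case its weight is $wt_{sp}(a)$, provided the junction count does not exceed the lost cyclic pair. The case that needs checking is $a_{n_1-1}\neq0$ and $a_0\neq0$, which produces two junction pairs for one lost cyclic pair.

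If that gap appears, I would choose a minimum-weight word with $a_{n_1-1}=0$ or $a_0=0$, using a constacyclic shift of $a$. Otherwise I would note that the lemma, as used in Case II, is really applied through the CRT isomorphism, so the distance is taken as the stated minimum. In either case, verifying equality in this junction subcase is the step I expect to be the real difficulty.
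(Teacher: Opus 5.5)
Your treatment of length, size and linearity is fine, and your lower bound is correct. The pairs meeting the $a$-block contain at least $wt_{sp}(a)$ nonzero pairs, because $\mathbb{I}(a_{n_1-1}\neq0)+\mathbb{I}(a_0\neq0)\ge\mathbb{I}\big((a_{n_1-1},a_0)\neq(0,0)\big)$, so $d_{sp}(A\oplus B)\ge\min\{d_{sp}(A),d_{sp}(B)\}$. The paper gives no proof of this lemma; it only cites it.

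The step you flagged, attaining the minimum, is a genuine gap, and it cannot be closed. With the paper's cyclic pair-read on length $n_1+n_2$, the claimed equality is false. Over any ring $R$ take $A=B=\{(c,c,c):c\in R\}$. Every nonzero word has all three pairs nonzero, so $d_{sp}(A)=d_{sp}(B)=3$. In $A\oplus B$, the word $(c,c,c,0,0,0)$ with $c\neq0$ has nonzero pairs exactly at positions $0,1,2,5$, so it has weight $4$. The same holds for $(0,0,0,d,d,d)$, and if $c,d\neq0$ all six pairs are nonzero. Hence $d_{sp}(A\oplus B)=4\neq3$.

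Neither of your repairs survives this example. A constacyclic shift is not available, since $A$ is only assumed linear. Even when it is available (the example is cyclic), $\tau_\lambda$ preserves the number of nonzero coordinates. So it cannot create a zero at position $0$ or $n_1-1$ when every minimum-weight word has full support. The fallback through the CRT simply assumes the conclusion. Moreover, the CRT map $f_L+x^{2p^s}f_H\mapsto(f_L-\delta f_H,\,f_L+\delta f_H)$ mixes the two halves of the coordinate vector. It therefore does not preserve symbol-pair weight and is not a concatenation of the kind the lemma describes.

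What your argument actually proves is
\[
\min\{d_{sp}(A),d_{sp}(B)\}\le d_{sp}(A\oplus B)\le\min\{d_{sp}(A),d_{sp}(B)\}+1,
\]
using $(a,0)$ with $a$ of minimum weight for the upper bound. For $n_1,n_2\ge2$, a word with both blocks nonzero has weight at least $\max\{wt_{sp}(a),wt_{sp}(b)\}+1$. So the left inequality is an equality exactly when one of the codes attaining the minimum has a minimum-weight word whose first or last coordinate is zero. The lemma therefore holds as stated only under such an extra hypothesis. It also holds if the pair-read of $(a,b)$ is taken blockwise, as $\pi(a)$ followed by $\pi(b)$, since then weights simply add.
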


\section{Symbol-Pair Distances of Constacyclic Codes over $R_3$}

In this section, we provide a complete characterization of the symbol-pair distances for all $\Delta$-constacyclic codes of length $4p^s$ over the finite commutative chain ring $R_3$. The analysis is partitioned based on the quadratic character of the shift unit $\Delta \in R_3^{*}$.

\subsection{\textbf{When $\Delta$ is a Non-Square Unit}}\label{3.1}
Throughout this subsection, let $\Delta = \alpha \in \mathbb{F}_{p^m}^{*}$ be a non-square unit, implying that $x^4 - \alpha$ is irreducible over $\mathbb{F}_{p^m}$. We compute the symbol-pair distance for each of the eight distinct ideal types of the ambient ring $R_{\alpha} = R_3[x]/\langle x^{4p^s} - \alpha \rangle$ described in Section~\ref{subsec:struct}.

It is clear from definition that the trivial ideals $\langle 0 \rangle$ and $\langle 1 \rangle$ have symbol-pair distances of $0$ and $2$, respectively. For Type 2 codes, we establish the following reduction:

\begin{theorem}
Let $\mathcal{C}_{2} = \langle u^2(x^4-\alpha_{0})^l\rangle \subseteq R_{\alpha}$ with $0 \leq l \leq p^s-1$ be a Type 2 $\alpha$-constacyclic code. Then the symbol-pair distance of $\mathcal{C}_2$ satisfies
\[
d_{sp}(\mathcal{C}_2) = d_{sp}\left(\langle(x^4-\alpha_{0})^l\rangle_{\mathbb{F}_{p^m}}\right),
\]
where $\langle(x^4-\alpha_{0})^l\rangle_{\mathbb{F}_{p^m}}$ is the corresponding linear code over the finite field $\mathbb{F}_{p^m}$, whose distance is given in Section~\ref{subsec:formula}.
\end{theorem}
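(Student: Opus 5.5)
The plan is to identify $\mathcal{C}_2$ with $u^2$ times a code over the residue field and then transfer weights coordinatewise. Every element of $R_3$ has the form $a+ub+u^2c$ with $a,b,c\in\mathbb{F}_{p^m}$, and $u^2\cdot u=u^3=0$. Hence for any $f(x)\in R_{\alpha}$ we have $u^2 f(x)=u^2\bar f(x)$, where $\bar f(x)$ is the reduction of $f(x)$ modulo $u$. This reduction lies in $\mathbb{F}_{p^m}[x]/\langle x^{4p^s}-\bar\alpha\rangle$; since $\alpha\in\mathbb{F}_{p^m}^*$, we have $\bar\alpha=\alpha$.

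First I will show that
\[
\mathcal{C}_2=\{u^2 g(x)(x^4-\alpha_0)^l : g(x)\in \mathbb{F}_{p^m}[x]/\langle x^{4p^s}-\alpha\rangle\}=u^2\cdot \mathcal{D},
\]
where $\mathcal{D}=\langle (x^4-\alpha_0)^l\rangle_{\mathbb{F}_{p^m}}$. The inclusion $\supseteq$ is immediate. For $\subseteq$, take a general element $f(x)u^2(x^4-\alpha_0)^l$ of $\mathcal{C}_2$ and use $u^2f=u^2\bar f$.

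I also need the map $\mathcal{D}\to\mathcal{C}_2$, $c\mapsto u^2c$, to be well defined and injective on vectors of length $4p^s$. It is well defined because multiplication by $x$ acts the same way on coefficient vectors in both ambient rings, as $\alpha$ lies in the base field. It is injective because $u^2c_i=0$ in $R_3$ exactly when $c_i=0$ in $\mathbb{F}_{p^m}$, since $\{1,u,u^2\}$ is an $\mathbb{F}_{p^m}$-basis of $R_3$.

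Next I will compare symbol-pair weights. For $\mathbf{c}=(c_0,\dots,c_{n-1})\in\mathcal{D}$, the pair $(u^2c_i,u^2c_{i+1})$ is nonzero if and only if $(c_i,c_{i+1})\neq(0,0)$, again because $u^2 a=0$ iff $a=0$ for $a\in\mathbb{F}_{p^m}$. So $wt_{sp}(u^2\mathbf{c})=wt_{sp}(\mathbf{c})$. The map $\mathbf{c}\mapsto u^2\mathbf{c}$ is a bijection between the nonzero codewords of $\mathcal{D}$ and those of $\mathcal{C}_2$. Since both codes are linear, taking minima gives $d_{sp}(\mathcal{C}_2)=d_{sp}(\mathcal{D})$, and the value is then read off from Theorem~\ref{thm:dinh1}.

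The main subtlety is the bookkeeping about $\alpha_0$. The field code of Theorem~\ref{thm:dinh1} lives in $\mathbb{F}_{p^m}[x]/\langle x^{4p^s}-\alpha_0\rangle$, while the reduction of $R_\alpha$ modulo $u$ is $\mathbb{F}_{p^m}[x]/\langle x^{4p^s}-\alpha\rangle$. I will check that $x^{4p^s}-\alpha=(x^4-\alpha_0)^{p^s}$ in characteristic $p$, with $\alpha_0$ chosen as the $p^s$-th root of $\alpha$ as in Theorem~\ref{thm:dinh0}. Then $\mathcal{D}$ is exactly the Case 1/Case 2 field code $\langle (x^4-\alpha_0)^l\rangle$, and the two ambient rings coincide.
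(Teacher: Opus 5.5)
Your proposal is correct and follows essentially the same route as the paper: you use $u^3=0$ to write every codeword as $u^2\mathbf{f}$ with $\mathbf{f}\in\langle (x^4-\alpha_0)^l\rangle_{\mathbb{F}_{p^m}}$, note that $u^2a=0$ iff $a=0$ for $a\in\mathbb{F}_{p^m}$, conclude that $\mathbf{f}\mapsto u^2\mathbf{f}$ is a bijection on nonzero codewords preserving symbol-pair weight, and take minima. Your extra bookkeeping makes explicit what the paper's proof leaves implicit: injectivity of the map, and the identity $x^{4p^s}-\alpha=(x^4-\alpha_0)^{p^s}$, which requires $\alpha_0^{p^s}=\alpha$ rather than the literal $\alpha_0=\alpha^{p^s}$ written in the paper.
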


\begin{proof}
Let $\mathcal{C}_{\mathbb{F}_{p^m}} = \langle(x^4-\alpha_{0})^l\rangle_{\mathbb{F}_{p^m}}$ be the corresponding linear code defined over the finite field $\mathbb{F}_{p^m}$.    Any polynomial $c(x)$ belonging to the Type 2 $\alpha$-constacyclic code $\mathcal{C}_2 = \langle u^2(x^4-\alpha_{0})^l\rangle$ can be expressed as $c(x) = q(x) u^2 (x^4-\alpha_0)^l \pmod{x^n-\alpha}$,
for some  $q(x) \in \mathcal{R}_{\alpha}$. Now  the polynomial $q(x)$ can be uniquely decomposed as $q(x) = q_0(x) + u q_1(x) + u^2 q_2(x)$, where $q_0(x), q_1(x), q_2(x) \in \mathbb{F}_{p^m}[x]$. Since   $u^3 = 0$, multiplying the expansion by the element $u^2$ we have that 
\begin{align*}
u^2 q(x) &= u^2 \big(q_0(x) + u q_1(x) + u^2 q_2(x)\big)\\
         &= u^2 q_0(x).\\
\textrm{That is,} \;c(x) &= u^2 q_0(x) (x^4-\alpha_0)^l\\
                         &= u^2 f(x) \pmod{x^n-\alpha},
                         \end{align*}
where $f(x) = q_0(x) (x^4-\alpha_0)^l \pmod{x^n-\alpha}$ is strictly a valid codeword in the field-level linear code $\mathcal{C}_{\mathbb{F}_{p^m}}$.

Let $f(x) = \sum_{i=0}^{n-1} f_i x^i$ where  $f_i \in \mathbb{F}_{p^m}$. And  $c(x) = \sum_{i=0}^{n-1} c_i x^i$ where 
$c_i = u^2 f_i \in R_3$. Expressing $c(x)$ and $f(x)$ in their respective vector representations over the block length $n$, we obtain 
$\mathbf{c} = (c_0, c_1, \ldots, c_{n-1}) = (u^2 f_0, u^2 f_1, \ldots, u^2 f_{n-1}) \in \mathcal{C}_2$,
and $\mathbf{f} = (f_0, f_1, \ldots, f_{n-1}) \in \mathcal{C}_{\mathbb{F}_{p^m}}$.

Recall that the symbol-pair read operator $\pi: R_3^n \rightarrow (R_3^2)^n$ maps an $n$-tuple vector to its corresponding sequence of consecutive pairs:
\[
\pi(\mathbf{c}) = \Big( (c_0, c_1), (c_1, c_2), \ldots, (c_{n-1}, c_0) \Big).
\]
By definition, the symbol-pair weight $wt_{sp}(\mathbf{c})$ of the codeword $\mathbf{c}$ is determined by the cardinality of its non-zero symbol pairs, which can be formally expressed via indicator functions as:
\[
wt_{sp}(\mathbf{c}) = \sum_{i=0}^{n-1} \mathbb{I}\Big( (c_i, c_{i+1}) \neq (0,0) \Big),
\]
where index arithmetic is performed modulo $n$. We now examine a generic consecutive coordinate pair $(c_i, c_{i+1}) = (u^2 f_i, u^2 f_{i+1})$ within the symbol-pair read array. Since $u^2$ is a fixed non-zero structural element of $R_3$, and the scalar components $f_i, f_{i+1}$ reside within the finite field $\mathbb{F}_{p^m}$, it follows from the algebraic structure of the chain ring that $u^2 f_i = 0$ if and only if $   f_i = 0$.

Consequently, 
\begin{align*}
&(c_i, c_{i+1}) = (0,0) \\
&\iff (u^2 f_i, u^2 f_{i+1}) = (0,0)\\
&\iff (f_i, f_{i+1}) = (0,0).
\end{align*}
Thus  $(c_i, c_{i+1}) \neq (0,0) \iff (f_i, f_{i+1}) \neq (0,0)$ so that 
$\mathbb{I}\Big( (c_i, c_{i+1}) \neq (0,0) \Big) = \mathbb{I}\Big( (f_i, f_{i+1}) \neq (0,0) \Big)$.
And therefore
\begin{align*}
wt_{sp}(\mathbf{c}) &= \sum_{i=0}^{n-1} \mathbb{I}\Big( (c_i, c_{i+1}) \neq (0,0) \Big)\\
                    &= \sum_{i=0}^{n-1} \mathbb{I}\Big( (f_i, f_{i+1}) \neq (0,0) \Big)\\
                    &= wt_{sp}(\mathbf{f}).
\end{align*}

Moreover, as $u^2 \mathbf{f} = \mathbf{0}$ if and only if $\mathbf{f} = \mathbf{0}$, the map $\mathbf{f} \mapsto u^2 \mathbf{f}$ yields a weight-preserving bijection between $\mathcal{C}_{\mathbb{F}_{p^m}} \setminus \{\mathbf{0}\}$ and $\mathcal{C}_2 \setminus \{\mathbf{0}\}$.  Minimizing over all non-zero elements, the minimum symbol-pair distance $d_{sp}(\mathcal{C}_2)$ is computed as
\begin{align*}
    d_{sp}(\mathcal{C}_2) &= \min_{\mathbf{c} \in \mathcal{C}_2 \setminus \{\mathbf{0}\}} wt_{sp}(\mathbf{c})\\
    &= \min_{\mathbf{f} \in \mathcal{C}_{\mathbb{F}_{p^m}} \setminus \{\mathbf{0}\}} wt_{sp}(u^2 \mathbf{f})\\  &= \min_{\mathbf{f} \in \mathcal{C}_{\mathbb{F}_{p^m}} \setminus \{\mathbf{0}\}} wt_{sp}(\mathbf{f})\\ 
    &= d_{sp}(\mathcal{C}_{\mathbb{F}_{p^m}}).
\end{align*}
Hence $d_{sp}(\mathcal{C}_2) = d_{sp}\left(\langle(x^4-\alpha_{0})^l\rangle_{\mathbb{F}_{p^m}}\right)$.
This  completes the proof.
\end{proof}

\begin{theorem}
Let $\mathcal{C}_{3} = \langle u(x^4-\alpha_{0})^l + u^2(x^4-\alpha_{0})^{t}h(x)\rangle \subseteq R_{\alpha}$ be a Type 3 $\alpha$-constacyclic code of length $4p^s$, and let $L$ be the unique parameter defined in Theorem~\ref{thm:dinh0}. Then the symbol-pair distance of $\mathcal{C}_{3}$ is given by:
\[
d_{sp}(\mathcal{C}_3) = d_{sp}\left(\langle(x^4-\alpha_{0})^L\rangle_{\mathbb{F}_{p^m}}\right),
\]
where the explicit distance values are determined in  Section~\ref{subsec:formula} using the parameter $j = L$.
\end{theorem}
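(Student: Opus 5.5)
We prove the equality as two inequalities. The lower bound comes from showing every nonzero codeword contains a nonzero $u^2$-multiple of no larger pair weight. The upper bound comes from the inclusion $u^2\langle (x^4-\alpha_0)^L\rangle \subseteq \mathcal{C}_3$ together with the Type 2 theorem just proved.

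\textbf{Upper bound.} By definition of $L$ we have $u^2(x^4-\alpha_0)^L\in\mathcal{C}_3$, so the Type 2 code $\langle u^2(x^4-\alpha_0)^L\rangle$ is a subcode of $\mathcal{C}_3$. The case $L=p^s$ would make this subcode zero. We exclude it by noting that $u^2(x^4-\alpha_0)^{l}=u\cdot\bigl(u(x^4-\alpha_0)^l+u^2(\cdots)\bigr)\in\mathcal{C}_3$, so $L\le l\le p^s-1$. The preceding theorem then gives
\[
d_{sp}(\mathcal{C}_3)\le d_{sp}\bigl(\langle u^2(x^4-\alpha_0)^L\rangle\bigr)=d_{sp}\bigl(\langle (x^4-\alpha_0)^L\rangle_{\mathbb{F}_{p^m}}\bigr).
\]

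\textbf{Lower bound.} Take a nonzero $\mathbf{c}\in\mathcal{C}_3$. Since every codeword is a multiple of $u$, write $\mathbf{c}=u\mathbf{a}+u^2\mathbf{b}$ with $\mathbf{a},\mathbf{b}$ having $\mathbb{F}_{p^m}$-coordinates. We split into two cases.

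If $\mathbf{a}\neq 0$, set $\mathbf{c}'=u\mathbf{c}=u^2\mathbf{a}\in\mathcal{C}_3$, which is nonzero. Multiplication by $u$ sends zero coordinates to zero, so $\mathrm{supp}(\mathbf{c}')\subseteq\mathrm{supp}(\mathbf{c})$. The pair weight is monotone under support inclusion, hence $wt_{sp}(\mathbf{c}')\le wt_{sp}(\mathbf{c})$.

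If $\mathbf{a}=0$, then $\mathbf{c}'=\mathbf{c}$ is already of the form $u^2\mathbf{f}$.

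In both cases we obtain a nonzero element $u^2\mathbf{f}\in\mathcal{C}_3\cap u^2R_\alpha$ with $wt_{sp}(u^2\mathbf{f})\le wt_{sp}(\mathbf{c})$. It remains to identify $\mathcal{C}_3\cap u^2R_\alpha$ with $u^2\langle (x^4-\alpha_0)^L\rangle_{\mathbb{F}_{p^m}}$, i.e.\ to show that $u^2f(x)\in\mathcal{C}_3$ forces $(x^4-\alpha_0)^L\mid f(x)$ in $\mathbb{F}_{p^m}[x]/\langle x^{4p^s}-\alpha_0\rangle$. Granting this, the weight-preserving correspondence $\mathbf{f}\mapsto u^2\mathbf{f}$ from the Type 2 proof gives
\[
wt_{sp}(\mathbf{c})\ge wt_{sp}(\mathbf{f})\ge d_{sp}\bigl(\langle (x^4-\alpha_0)^L\rangle_{\mathbb{F}_{p^m}}\bigr).
\]

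\textbf{Main obstacle.} The key step is showing that the ideal $\{f : u^2f\in\mathcal{C}_3\}$ of the field-level ring equals $\langle (x^4-\alpha_0)^L\rangle$. The plan is as follows.

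First, $\mathbb{F}_{p^m}[x]/\langle x^{4p^s}-\alpha\rangle$ is a chain ring with maximal ideal $\langle x^4-\alpha_0\rangle$. This holds because $x^{4p^s}-\alpha=(x^4-\alpha_0)^{p^s}$ with $x^4-\alpha_0$ irreducible.

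Second, the set $J=\{f:u^2f\in\mathcal{C}_3\}$ is an ideal, because $\mathcal{C}_3$ is an ideal and $u^2 R_\alpha$ is annihilated by $u$, so $u^2f$ depends only on $f$ modulo $u$.

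Since ideals of a chain ring are linearly ordered, $J=\langle (x^4-\alpha_0)^k\rangle$ for some $k$. Then $u^2(x^4-\alpha_0)^k\in\mathcal{C}_3$, and the minimality of $L$ in Theorem~\ref{thm:dinh0} forces $k\ge L$. Conversely, $k\le L$ because $u^2(x^4-\alpha_0)^L\in\mathcal{C}_3$. Hence $k=L$.

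One subtlety must be checked: $L$ is defined via membership of $u^2(x^4-\alpha_0)^L$ itself, while $J$ allows arbitrary $f$ with an $\mathbb{F}_{p^m}$-unit cofactor. This is handled because any $f$ with $v(f)=k$ equals $(x^4-\alpha_0)^k$ times a unit $g$. Lifting $g$ to a unit of $R_\alpha$ and multiplying $u^2f$ by $g^{-1}$ stays in $\mathcal{C}_3$, so $u^2(x^4-\alpha_0)^k\in\mathcal{C}_3$ as required.

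Combining the two inequalities yields $d_{sp}(\mathcal{C}_3)=d_{sp}\bigl(\langle (x^4-\alpha_0)^L\rangle_{\mathbb{F}_{p^m}}\bigr)$.
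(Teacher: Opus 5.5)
Your proof is correct. It shares the paper's skeleton: the upper bound comes from the subcode $\langle u^2(x^4-\alpha_0)^L\rangle\subseteq\mathcal{C}_3$ plus the Type 2 theorem, and the lower bound comes from multiplying a codeword by $u$. Your lower bound, however, rests on a different key fact.

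The paper writes $e(x)=[g_0+ug_u+u^2g_{u^2}]\cdot(\text{generator})$ and bounds $wt_{sp}(e)\ge wt_{sp}(ue)=wt_{sp}(u^2g_0(x^4-\alpha_0)^l)\ge d_{sp}(\langle(x^4-\alpha_0)^l\rangle_{\mathbb{F}_{p^m}})$. It then weakens this to the code generated by $(x^4-\alpha_0)^L$ using $L\le l$. That chain silently assumes $ue\neq 0$, so it does not cover nonzero codewords lying in $u^2R_\alpha$. An example is $(x^4-\alpha_0)^{p^s-l}$ times the generator, which equals $u^2(x^4-\alpha_0)^{p^s-l+t}h(x)$ and is nonzero when $h(x)\neq 0$ and $t<l$. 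Such elements are exactly what make $L<l$ possible, and for them the intermediate inequality via $l$ fails outright, since $ue=0$.

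Your identification of $J=\{f : u^2f\in\mathcal{C}_3\}$ with $\langle(x^4-\alpha_0)^L\rangle$ uses the chain-ring structure of $\mathbb{F}_{p^m}[x]/\langle(x^4-\alpha_0)^{p^s}\rangle$ and the minimality of $L$. It handles both of your cases, $\mathbf{a}\neq 0$ and $\mathbf{a}=0$, uniformly. So your argument closes a gap the paper leaves open, at the cost of one small extra lemma about $J$; the paper avoids that lemma only by working with the exponent $l$ of the generator.

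One minor remark: your closing ``subtlety'' paragraph about lifting units is unnecessary. Once $J=\langle(x^4-\alpha_0)^k\rangle$ is known to be an ideal, $(x^4-\alpha_0)^k\in J$ already gives $u^2(x^4-\alpha_0)^k\in\mathcal{C}_3$.
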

\begin{proof}
Let $e(x)$ be an arbitrary non-zero codeword in $\mathcal{C}_{3}$. There exist polynomials $g_0(x), g_u(x)$ and $ g_{u^2}(x) \in \mathbb{F}_{p^m}[x]$ such that
\[
e(x) = \left[g_0(x) + u g_u(x) + u^2 g_{u^2}(x)\right] \left[u(x^4-\alpha_0)^l + u^2(x^4-\alpha_0)^t h(x)\right].
\]
Multiplying by $u$ eliminates the $u^2$ terms, yielding $u e(x) = u^2 g_0(x)(x^4-\alpha_0)^l$. Since multiplication by a ring element cannot increase the symbol-pair weight, we have:
\begin{align*}
wt_{sp}(e(x)) &\geq wt_{sp}(u e(x))\\
&= wt_{sp}\left(u^2 g_0(x)(x^4-\alpha_0)^l\right)\\                                        &\geq d_{sp}(\mathcal{C}_3) = d_{sp}\left(\langle(x^4-\alpha_{0})^l\rangle_{\mathbb{F}_{p^m}}\right).
\end{align*}
By the structural definition of Type 3 codes, $\langle(x^4-\alpha_0)^l\rangle_{\mathbb{F}_{p^m}} \subseteq \langle(x^4-\alpha_0)^L\rangle_{\mathbb{F}_{p^m}}$, which implies that $d_{sp}\left(\langle(x^4-\alpha_0)^l\rangle_{\mathbb{F}_{p^m}}\right) \geq d_{sp}\left(\langle(x^4-\alpha_0)^L\rangle_{\mathbb{F}_{p^m}}\right)$. Therefore, $wt_{sp}(e(x)) \geq d_{sp}\left(\langle(x^4-\alpha_0)^L\rangle_{\mathbb{F}_{p^m}}\right)$ for all $e(x) \in \mathcal{C}_3 \setminus \{0\}$, from which we deduce:
\begin{equation}\label{eq:t3_1}
d_{sp}(\mathcal{C}_3) \geq d_{sp}\left(\langle(x^4-\alpha_0)^L\rangle_{\mathbb{F}}\right).
\end{equation}
Conversely, by definition, $L$ is the smallest integer satisfying $u^2(x^4-\alpha_0)^L \in \mathcal{C}_3$. Hence, $\langle u^2(x^4-\alpha_0)^L \rangle \subseteq \mathcal{C}_3$, which immediately yields:
\begin{equation}\label{eq:t3_2}
d_{sp}\left(\langle(x^4-\alpha_0)^L\rangle_{\mathbb{F}_{p^m}}\right) = d_{sp}\left(\langle u^2(x^4-\alpha_0)^L \rangle\right) \geq d_{sp}(\mathcal{C}_3).
\end{equation}
Combining inequalities \eqref{eq:t3_1} and \eqref{eq:t3_2}, we obtain that $d_{sp}(\mathcal{C}_3) = d_{sp}\left(\langle(x^4-\alpha_0)^L\rangle_{\mathbb{F}_{p^m}}\right)$.
\end{proof}

\begin{theorem}
Let $\mathcal{C}_{4} = \langle u(x^4-\alpha_{0})^l+u^2(x^4-\alpha_{0})^{t}h(x), u^2(x^4-\alpha_{0})^{\omega} \rangle \subseteq R_{\alpha}$ be a Type 4 $\alpha$-constacyclic code. Then the symbol-pair distance of $\mathcal{C}_{4}$ is given by:
\[
d_{sp}(\mathcal{C}_4) = d_{sp}\left(\langle(x^4-\alpha_{0})^{\omega}\rangle_{\mathbb{F}_{p^m}}\right).
\]
\end{theorem}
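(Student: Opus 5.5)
We follow the same two-sided comparison used for Type 3, with the generator $u^2(x^4-\alpha_0)^{\omega}$ now playing the role of $u^2(x^4-\alpha_0)^L$. Throughout, write $g(x)=u(x^4-\alpha_0)^l+u^2(x^4-\alpha_0)^t h(x)$.

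\textbf{Upper bound.} Since $u^2(x^4-\alpha_0)^{\omega}\in\mathcal{C}_4$, the Type 2 code $\langle u^2(x^4-\alpha_0)^{\omega}\rangle$ is contained in $\mathcal{C}_4$. The Type 2 theorem then gives
\[
d_{sp}(\mathcal{C}_4)\le d_{sp}\bigl(\langle u^2(x^4-\alpha_0)^{\omega}\rangle\bigr)=d_{sp}\bigl(\langle(x^4-\alpha_0)^{\omega}\rangle_{\mathbb{F}_{p^m}}\bigr).
\]

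\textbf{Lower bound.} Take a nonzero $e(x)=a(x)g(x)+b(x)u^2(x^4-\alpha_0)^{\omega}$ with $a,b\in R_\alpha$, and write $a=a_0+ua_1+u^2a_2$ and $b=b_0+ub_1+u^2b_2$ with $a_i,b_i\in\mathbb{F}_{p^m}[x]$. Expanding, we get
\[
e=u\,a_0(x^4-\alpha_0)^l+u^2\bigl[a_1(x^4-\alpha_0)^l+a_0(x^4-\alpha_0)^th+b_0(x^4-\alpha_0)^{\omega}\bigr].
\]
We split into two cases.

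\emph{Case (i): $ue\neq0$.} Then $ue=u^2a_0(x^4-\alpha_0)^l$ is a nonzero word of the form $u^2f$ with $f\in\langle(x^4-\alpha_0)^l\rangle_{\mathbb{F}_{p^m}}$. Multiplication by a ring element cannot create new nonzero pairs, so $wt_{sp}(e)\ge wt_{sp}(ue)$. Since $l\ge\omega$, we have $\langle(x^4-\alpha_0)^l\rangle\subseteq\langle(x^4-\alpha_0)^{\omega}\rangle$, hence $wt_{sp}(ue)\ge d_{sp}(\langle(x^4-\alpha_0)^{\omega}\rangle_{\mathbb{F}_{p^m}})$.

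\emph{Case (ii): $ue=0$.} Then $u\,a_0(x^4-\alpha_0)^l=0$. Because $(x^4-\alpha_0)$ has nilpotency index $p^s$ in $\mathbb{F}_{p^m}[x]/\langle x^{4p^s}-\alpha_0\rangle$, this forces $(x^4-\alpha_0)^{p^s-l}\mid a_0$. Now $e=u^2F$ with
\[
F=a_1(x^4-\alpha_0)^l+a_0(x^4-\alpha_0)^th+b_0(x^4-\alpha_0)^{\omega}.
\]
We need $F\in\langle(x^4-\alpha_0)^{\omega}\rangle_{\mathbb{F}_{p^m}}$. The first term lies there since $l\ge\omega$, and the third trivially. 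The middle term has $(x^4-\alpha_0)$-adic valuation at least $p^s-l+t$. We need this to be at least $\omega$; equivalently $\omega\le p^s-l+t$. This is exactly the condition that $u^2(x^4-\alpha_0)^{p^s-l+t}h=(x^4-\alpha_0)^{p^s-l}g$ lies in $\mathcal{C}_4$. By the definition of $L$ this gives $L\le p^s-l+t$, and we have $\omega<L$. (If $h=0$ the middle term vanishes outright.) With $F$ in this code, the Type 2 argument gives $wt_{sp}(e)=wt_{sp}(F)\ge d_{sp}(\langle(x^4-\alpha_0)^{\omega}\rangle_{\mathbb{F}_{p^m}})$.

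Combining both cases with the upper bound proves the equality.

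The main obstacle is Case (ii): certifying that the $u^2$-component of words killed by $u$ really lies in the $(x^4-\alpha_0)^{\omega}$ code. The key step is the valuation bound $L\le p^s-l+t$, which follows from the minimality of $L$ in the classification of Theorem~\ref{thm:dinh0}. Both Case (i) and Case (ii) also use that $(x^4-\alpha_0)$ is the unique irreducible factor (valuation well defined), which holds because $x^4-\alpha$ is irreducible in Case I.
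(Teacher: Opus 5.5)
Your proof is correct and follows the same route as the paper: the upper bound comes from the Type 2 subcode $\langle u^2(x^4-\alpha_0)^{\omega}\rangle\subseteq\mathcal{C}_4$, and the lower bound comes from $wt_{sp}(e)\ge wt_{sp}(ue)$ together with $\langle(x^4-\alpha_0)^l\rangle\subseteq\langle(x^4-\alpha_0)^{\omega}\rangle$. Your Case (ii) is more careful than the paper, which simply asserts that every codeword outside $\langle u^2(x^4-\alpha_0)^{\omega}\rangle$ satisfies $ue\neq 0$; your argument via $\omega<L\le p^s-l+t$ supplies that missing justification (with $L$ read relative to $\langle g\rangle$ rather than $\mathcal{C}_4$, which is harmless since $(x^4-\alpha_0)^{p^s-l}g\in\langle g\rangle$).
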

\begin{proof}
Because $u^2(x^4-\alpha_{0})^{\omega} \in \mathcal{C}_{4}$, the code contains the ideal generated by this element. Thus, we have an immediate upper bound:
\begin{equation}\label{eq:t4_1}
d_{sp}(\mathcal{C}_4) \leq d_{sp}\left(\langle u^2(x^4-\alpha_{0})^{\omega}\rangle\right) = d_{sp}\left(\langle(x^4-\alpha_{0})^{\omega}\rangle_{\mathbb{F}_{p^m}}\right).
\end{equation}
For the lower bound, let $e(x) \in \mathcal{C}_4 \setminus \{0\}$. If $e(x) \in \langle u^2(x^4-\alpha_0)^{\omega}\rangle$, its weight is naturally bounded below by $d_{sp}\left(\langle(x^4-\alpha_{0})^{\omega}\rangle_{\mathbb{F}_{p^m}}\right)$. If $e(x) \notin \langle u^2(x^4-\alpha_0)^{\omega}\rangle$, multiplying by $u$ yields a non-zero element in the form $u^2 g(x) (x^4-\alpha_0)^l$. It follows that:

\begin{align*}
wt_{sp}(e(x)) &\geq wt_{sp}(u e(x))\\
              &\geq d_{sp}\left(\langle(x^4-\alpha_{0})^l\rangle_{\mathbb{F}_{p^m}}\right).
\end{align*}
Since $l \geq L > \omega$, we have the structural containment $\langle(x^4-\alpha_{0})^l\rangle_{\mathbb{F}_{p^m}} \subseteq \langle(x^4-\alpha_{0})^{\omega}\rangle_{\mathbb{F}_{p^m}}$, which ensures that $d_{sp}\left(\langle(x^4-\alpha_{0})^l\rangle_{\mathbb{F}_{p^m}}\right) \geq d_{sp}\left(\langle(x^4-\alpha_{0})^{\omega}\rangle_{\mathbb{F}_{p^m}}\right)$. Hence, $d_{sp}(\mathcal{C}_4) \geq d_{sp}\left(\langle(x^4-\alpha_{0})^{\omega}\rangle_{\mathbb{F}_{p^m}}\right)$, establishing equality.
\end{proof}

Using analogous proof steps based on structural verification via $u$-annihilation and ideal containment, we state the distances for types 5, 6, 7, and 8:

\begin{theorem}
The symbol-pair distances for Type 5, Type 6, Type 7, and Type 8 $\alpha$-constacyclic codes over $R_{\alpha}$ are uniquely determined by their structural containment parameters as follows:
\begin{enumerate}
    \item For $\mathcal{C}_5$, $d_{sp}(\mathcal{C}_5) = d_{sp}\left(\langle(x^4-\alpha_0)^V\rangle_{\mathbb{F}_{p^m}}\right)$.
    \item For $\mathcal{C}_6$, $d_{sp}(\mathcal{C}_6) = d_{sp}\left(\langle(x^4-\alpha_0)^c\rangle_{\mathbb{F}_{p^m}}\right)$.
    \item For $\mathcal{C}_7$, $d_{sp}(\mathcal{C}_7) = d_{sp}\left(\langle(x^4-\alpha_0)^W\rangle_{\mathbb{F}_{p^m}}\right)$.
    \item For $\mathcal{C}_8$, $d_{sp}(\mathcal{C}_8) = d_{sp}\left(\langle(x^4-\alpha_0)^c\rangle_{\mathbb{F}_{p^m}}\right)$.
\end{enumerate}
\end{theorem}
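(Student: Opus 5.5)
The plan is to prove all four items at once by a single argument that uses only the ``socle'' of the code, i.e.\ its intersection with $u^2R_\alpha$. For an ideal $\mathcal{C}$ of $R_\alpha$, set
\[
T(\mathcal{C})=\min\{k\ge 0 : u^2(x^4-\alpha_0)^k\in\mathcal{C}\}.
\]
By the definitions in Theorem~\ref{thm:dinh0}, this is $V$ for Type 5 and $W$ for Type 7. For Types 6 and 8, the explicit generator $u^2(x^4-\alpha_0)^c$ together with $c<V$, respectively $c<W$, should give $T=c$. The target is then
\[
d_{sp}(\mathcal{C})=d_{sp}\bigl(\langle (x^4-\alpha_0)^{T}\rangle_{\mathbb{F}_{p^m}}\bigr).
\]

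\textbf{Upper bound.} Since $u^2(x^4-\alpha_0)^T\in\mathcal{C}$, we have $\langle u^2(x^4-\alpha_0)^T\rangle\subseteq\mathcal{C}$. The Type 2 theorem then gives $d_{sp}(\mathcal{C})\le d_{sp}(\langle u^2(x^4-\alpha_0)^T\rangle)=d_{sp}(\langle(x^4-\alpha_0)^T\rangle_{\mathbb{F}_{p^m}})$, exactly as in the Type 3 and Type 4 proofs.

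\textbf{Lower bound.} Take $e\in\mathcal{C}\setminus\{0\}$ and let $k\in\{0,1,2\}$ be the largest index with $u^ke\neq 0$. Then $u(u^ke)=0$. Because $R_\alpha$ is a free $R_3$-module with basis $1,x,\dots,x^{4p^s-1}$, the annihilator of $u$ in $R_\alpha$ is $u^2R_\alpha$, so $u^ke=u^2f(x)$ with $f\in\mathbb{F}_{p^m}[x]$ nonzero mod $u$. Multiplication by a ring element cannot create a nonzero pair from a zero pair, hence $wt_{sp}(e)\ge wt_{sp}(u^ke)$.

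Now set $I=\{\bar f\in\mathbb{F}_{p^m}[x]/\langle x^{4p^s}-\alpha\rangle : u^2f\in\mathcal{C}\}$. This is an ideal of the residue ring. Since $x^{4p^s}-\alpha=(x^4-\alpha_0)^{p^s}$ with $x^4-\alpha_0$ irreducible, that ring is a chain ring, so $I=\langle (x^4-\alpha_0)^{T}\rangle$ by minimality of $T$. Therefore $u^ke=u^2f$ with $f$ a nonzero codeword of $\langle(x^4-\alpha_0)^T\rangle_{\mathbb{F}_{p^m}}$. The weight-preservation step of the Type 2 proof, $wt_{sp}(u^2f)=wt_{sp}(f)$, then yields $wt_{sp}(e)\ge d_{sp}(\langle(x^4-\alpha_0)^T\rangle_{\mathbb{F}_{p^m}})$.

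\textbf{Main obstacle.} The step I expect to be hardest is the identification of $T$ for each type, namely showing that $\mathcal{C}\cap u^2R_\alpha$ contains no power of $(x^4-\alpha_0)$ smaller than the stated parameter.
\begin{itemize}
\item For Types 5 and 7 this holds by the very definition of $V$ and $W$ in Theorem~\ref{thm:dinh0}.
\item For Types 6 and 8 one must check that $u^2$-multiples obtained from the other generators (for example $u^2$ times the first generator, which gives $u^2(x^4-\alpha_0)^i$, or $u$ times the second, which gives $u^2(x^4-\alpha_0)^b$) have exponent at least $c$. This follows from the ordering $c<W\le L_1\le b<U\le i$, respectively $c<V\le U\le i$. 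Any $\mathbb{F}_{p^m}[x]$-combination of these reduces mod $u$ to an element of the ideal generated by $(x^4-\alpha_0)^{\min}$, with the minimum attained by $c$.
\end{itemize}
I would lean on the classification of \cite{Laaouine2025}, where these parameters are defined precisely as these minimal exponents, rather than recomputing them.
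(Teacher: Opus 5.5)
Your proposal is correct and follows essentially the route the paper intends; the paper gives no written proof here beyond ``analogous proof steps via $u$-annihilation and ideal containment.'' You take the upper bound from $\langle u^2(x^4-\alpha_0)^T\rangle\subseteq\mathcal{C}$ together with the Type 2 result, and the lower bound by pushing a nonzero codeword into $\mathcal{C}\cap u^2R_\alpha$ with a power of $u$ and identifying that torsion with $\langle (x^4-\alpha_0)^T\rangle$ in the chain ring $\mathbb{F}_{p^m}[x]/\langle (x^4-\alpha_0)^{p^s}\rangle$; this uniform formulation is in fact more careful than the paper's Type 3 and Type 4 arguments.

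One justification should be tightened: the identification $T=c$ for Types 6 and 8. Checking only $u^2$ times the first generator and $u$ times the second does not exhaust $\mathcal{C}\cap u^2R_\alpha$, because combinations of the generators can cancel their $u$-parts (which is exactly why $V$ and $W$ can be smaller than $i$ and $b$). Instead, note that for $B\subseteq u^2R_\alpha$ one has $(A+B)\cap u^2R_\alpha=(A\cap u^2R_\alpha)+B$. Take $A$ to be the ideal generated by the remaining generators, whose minimal torsion exponent is $V$ (resp.\ $W$). This gives $T=\min\{V,c\}=c$ (resp.\ $\min\{W,c\}=c$).
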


\subsection{\textbf{When $\Delta$ is a Square Unit}}
When $\Delta = \delta^2$ is a square in $\mathbb{F}_{p^m}^{*}$, the code splits via the direct sum decomposition $\mathcal{C} = \mathcal{C}^+ \oplus \mathcal{C}^-$. Applying Lemma~\ref{lem:direct_sum}, the minimum distance corresponds directly to the component metrics:

\begin{theorem}
Let $\delta = \alpha^2 \in \mathbb{F}_{p^m}^{*}$, and let $\mathcal{C} \cong \mathcal{C}^+ \oplus \mathcal{C}^-$ be a $\Delta$-constacyclic code of length $4p^s$ over $R_3$. Then
\[
d_{sp}(\mathcal{C}) = \min \{ d_{sp}(\mathcal{C}^+), \, d_{sp}(\mathcal{C}^-) \},
\]
where the distances of the components over the respective reduction fields are derived from the bounds established in \cite{Charkani2021}.
\end{theorem}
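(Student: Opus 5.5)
The plan is to reduce the statement to Lemma~\ref{lem:direct_sum}. The main step is to make precise what the isomorphism $\mathcal{C}\cong\mathcal{C}^+\oplus\mathcal{C}^-$ does to coordinates, since symbol-pair weight depends on the cyclic ordering.

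\textbf{Step 1: the decomposition.} Since $\Delta=\delta^2$ and $2$ is a unit, the polynomials $x^{2p^s}+\delta$ and $x^{2p^s}-\delta$ are coprime in $R_3[x]$; their difference is $2\delta$, a unit. The Chinese Remainder Theorem then gives the splitting of Section~\ref{subsec:struct}, and every ideal of $R_\Delta$ has the form $\mathcal{C}=\mathcal{C}^+\oplus\mathcal{C}^-$, with $\mathcal{C}^\pm$ constacyclic of length $2p^s$. Because $\delta=\alpha^2$, the same argument splits each component further into $\pm\alpha$-constacyclic pieces of length $p^s$. The symbol-pair distances of these pieces are the ones available from \cite{Charkani2021}.

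\textbf{Step 2: the direct sum as a length-$4p^s$ code.} To apply Lemma~\ref{lem:direct_sum}, $\mathcal{C}$ must be regarded as the code $\{(a,b): a\in\mathcal{C}^+,\ b\in\mathcal{C}^-\}$ of length $2p^s+2p^s=4p^s$. I would state explicitly that the identification is the module isomorphism given by CRT. The distance is then computed with respect to the concatenated coordinates, following the convention used in \cite{Dinh2023symbol}, so the statement is understood under that identification.

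\textbf{Step 3: the lemma, two inequalities.} For the upper bound, take a minimum-weight codeword $a$ of $\mathcal{C}^+$. The codeword $(a,0)\in\mathcal{C}$ has pair weight at most that of $a$ plus boundary corrections. To handle those corrections, I would use the inequalities $d_H+1\le d_{sp}\le 2d_H$, together with the fact that the component weights are realized by cyclic shifts within each block.

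For the lower bound, take a nonzero $(a,b)\in\mathcal{C}$. Without loss of generality $a\neq0$. The pairs lying inside the $a$-block account for at least the pair weight of $a$, up to the single wrap-around pair $(a_{n_1-1},a_0)$, which is replaced by the pairs $(a_{n_1-1},b_0)$ and $(b_{n_2-1},a_0)$.

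\textbf{Main obstacle.} The hardest part is exactly this boundary bookkeeping: showing that losing the internal wrap-around pair is compensated by the two junction pairs. If $a_{n_1-1}\neq0$ or $a_0\neq0$, one of the junction pairs is nonzero and the count closes. Otherwise the wrap pair was zero anyway, so nothing is lost.

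Applying Step 3 gives $\ge\min$. The upper bound then follows by choosing the component that attains the smaller distance, possibly after a cyclic shift within that component so that its pair weight is preserved at the junction. Together these yield
\[
d_{sp}(\mathcal{C})=\min\{d_{sp}(\mathcal{C}^+),d_{sp}(\mathcal{C}^-)\}.
\]
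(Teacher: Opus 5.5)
Your route is the same as the paper's, whose whole argument is the CRT splitting followed by an appeal to Lemma~\ref{lem:direct_sum}. There is a genuine gap in your Step~2, and it cannot be closed, because the claimed equality is false for the $\Delta$-constacyclic code itself.

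\textbf{The CRT map is not concatenation.} For $f\in R_\Delta$ write $f=f_0+x^{2p^s}f_1$ with $\deg f_0,\deg f_1<2p^s$. The CRT isomorphism sends $f$ to $(f_0-\delta f_1,\ f_0+\delta f_1)$. Hence the codeword of $\mathcal{C}$ corresponding to $(a,b)\in\mathcal{C}^+\times\mathcal{C}^-$ has coefficient vector $\bigl(\tfrac{a+b}{2}\,\big|\,\tfrac{b-a}{2\delta}\bigr)$, not $(a\,|\,b)$.

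So ``the module isomorphism given by CRT'' and ``the concatenated coordinates'' are incompatible. The isomorphism is $R_3$-linear and bijective, but it preserves neither Hamming nor symbol-pair weight. Your Step~3 therefore computes the distance of the code $\mathcal{C}^+\times\mathcal{C}^-\subseteq R_3^{4p^s}$. That is a different code, and in general it is not even $\Delta$-constacyclic. The same objection applies to the second splitting into length-$p^s$ pieces in Step~1.

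\textbf{A counterexample.} Let $\Delta=\alpha^4$ (so $\delta=\alpha^2$) and $\mathcal{C}=\langle x^{p^s}-\alpha\rangle\subseteq R_\Delta$.
\begin{itemize}
\item Modulo $x^{2p^s}+\alpha^2$ we have $(x^{p^s}-\alpha)(x^{p^s}+\alpha)=-2\alpha^2$, a unit. So $\mathcal{C}^+=\langle 1\rangle$ and $\min\{d_{sp}(\mathcal{C}^+),d_{sp}(\mathcal{C}^-)\}=2$.
\item A word of pair weight $2$ or $3$ has support consisting of one position or two cyclically adjacent positions. Multiplying it by a power of the unit $x$ would put a nonzero element $c_0+c_1x$ into $\mathcal{C}$.
\item But the monic polynomial $x^{p^s}-\alpha$ divides $x^{4p^s}-\Delta$. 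Hence every element of $\mathcal{C}$, in reduced form, is divisible by it in $R_3[x]$, so it is $0$ or has degree at least $p^s\ge 3$. This rules out $c_0+c_1x$.
\item The codeword $x^{p^s}-\alpha$ has pair weight $4$, so $d_{sp}(\mathcal{C})=4\neq 2$.
\end{itemize}
The concatenated code built from the same $\mathcal{C}^\pm$ has distance $2$, so the two models genuinely differ. Your remark that ``the statement is understood under that identification'' therefore replaces the theorem by a different one. The paper's one-line appeal to Lemma~\ref{lem:direct_sum} has exactly the same defect.

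\textbf{The upper bound in Step~3 is also incomplete for literal concatenation.} Your lower-bound bookkeeping at the junction is correct. The cyclic-shift trick, however, needs some minimum-weight word of the relevant component to have a zero coordinate. Suppose all of them have full support. An example is the one-dimensional $\delta$-constacyclic code spanned by $u^2(x^{p^s}-\alpha)(x+\alpha_0)^{p^s-1}$ with $\alpha_0^{p^s}=\alpha$; all coefficients are nonzero by Lucas' theorem. Then the word $(0\,|\,b)$ has pair weight $d_{sp}(\mathcal{C}^-)+1$. The inequalities $d_H+1\le d_{sp}\le 2d_H$ say nothing about this junction, so the equality in Lemma~\ref{lem:direct_sum} can itself fail by one.
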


\begin{theorem}\label{3.6}
Let $\mathcal{C} = \langle (x^2 -\alpha_0)^i \rangle$ be a $\delta$-constacyclic code of length $2p^s$ over the local chain ring $R_{\alpha + u\beta}$. The symbol-pair distance evaluates as:
\[
d_{sp}(\mathcal{C}) =
\begin{cases}
2, & \text{if } 0 \leq i \leq p^s, \\
2(\wp + 1)p^\tau, & \text{if } 2p^s - p^{s-\tau} - (\wp - 1)p^{s-\tau-1} + 1 \leq i \leq 2p^s - p^{s-\tau} + \wp p^{s-\tau-1}, \\
0, & \text{if } i = 3p^s,
\end{cases}
\]
where $0 \leq \tau \leq s-1$ and $1 \leq \wp \leq p-1$.
\end{theorem}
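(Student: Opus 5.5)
The plan is to exploit the chain structure of $R_{\alpha+u\beta}=R_3[x]/\langle x^{2p^s}-(\alpha+u\beta)\rangle$ in the way the Type 2--4 proofs above do. We reduce every code $\langle (x^2-\alpha_0)^i\rangle$ to a code of the form $u^k\cdot\langle (x^2-\alpha_0)^j\rangle_{\mathbb{F}_{p^m}}$ and then read off the distance from the field-level formula for constacyclic codes of length $2p^s$ over $\mathbb{F}_{p^m}$. That formula is the length-$2p^s$ analogue of Theorem~\ref{thm:dinh1}, with $x^4$ replaced by $x^2$, in \cite{Dinh2019symbol}.

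\textbf{Step 1: the key identity.} Since $\alpha_0^{p^s}$-type relations give $(x^2-\alpha_0)^{p^s}=x^{2p^s}-\alpha_0^{p^s}$, and in the ambient ring $x^{2p^s}=\alpha+u\beta$, we get
\[
(x^2-\alpha_0)^{p^s}=u\,\mu(x)
\]
for a unit $\mu(x)$. This uses the choice of $\alpha_0$ and the fact that $\beta\neq 0$, which is exactly what makes the ring a chain ring of nilpotency index $3p^s$. Consequently $(x^2-\alpha_0)^{p^s+k}=u(x^2-\alpha_0)^k\mu$ and $(x^2-\alpha_0)^{2p^s+k}=u^2(x^2-\alpha_0)^k\mu^2$. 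This identity is the analogue of the element $u^2(x^4-\alpha_0)^L$ used in the Type 3 proof.

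\textbf{Step 2: the endpoint cases.} If $0\le i\le p^s$, then $u=(x^2-\alpha_0)^{p^s}\mu^{-1}\in\mathcal{C}$. The constant codeword $(u,0,\dots,0)$ then has symbol-pair weight exactly $2$, and no nonzero word of length $\ge 2$ has smaller symbol-pair weight, so $d_{sp}(\mathcal{C})=2$. If $i=3p^s$, then $\mathcal{C}=\{0\}$ and the distance is $0$ by convention.

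\textbf{Step 3: the middle range.} Here we imitate the two-sided argument of the Type 3/Type 4 theorems. For the upper bound, the code contains $u^{k}(x^2-\alpha_0)^{j}$ with $k\in\{1,2\}$ and $j=i-kp^s$. By the weight-preserving bijection $\mathbf f\mapsto u^k\mathbf f$ from the Type 2 proof, this gives
\[
d_{sp}(\mathcal{C})\le d_{sp}\bigl(\langle (x^2-\alpha_0)^{j}\rangle_{\mathbb{F}_{p^m}}\bigr).
\]
For the lower bound, take $e\in\mathcal{C}\setminus\{0\}$ and multiply by the largest power $u^r$ with $u^re\neq 0$. This does not increase the symbol-pair weight, and it lands in $u^2\langle (x^2-\alpha_0)^{j'}\rangle_{\mathbb{F}_{p^m}}$ for some $j'\le j$, which is again controlled by the field formula through monotonicity under ideal containment. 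We then substitute the field values $2(\wp+1)p^\tau$ for the length-$2p^s$ field code, on the intervals determined by $\tau$ and $\wp$, and check that after the shift by $kp^s$ they become the stated interval $2p^s-p^{s-\tau}-(\wp-1)p^{s-\tau-1}+1\le i\le 2p^s-p^{s-\tau}+\wp p^{s-\tau-1}$.

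\textbf{Main obstacle.} I expect the main difficulty to be this last bookkeeping step. The lower-bound reduction must produce an exponent $j'$ whose field distance is not smaller than the claimed value, and the index translation from the field intervals (in $j$) to the stated intervals (in $i$) must match exactly. I would first verify the translation for $\tau=0$, using that the field formula gives $2(\wp+1)$ on the corresponding block, and then induct on $\tau$. If the intervals do not line up, I would apply the $u$-annihilation to bring every codeword into the $u^2$-layer and compare there directly.
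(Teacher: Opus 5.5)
Your Steps 1 and 2 are correct; the witness $(u,0,\dots,0)$ is cleaner than the paper's Case 1. Step 3, however, cannot be completed, and the obstruction is not bookkeeping: the middle case of the statement is false. Your own Step 1 gives $(x^2-\alpha_0)^{2p^s}=\beta^2u^2$. So for every $i\le 2p^s$,
\[
u^2=\beta^{-2}(x^2-\alpha_0)^{2p^s-i}(x^2-\alpha_0)^{i}\in\mathcal{C},
\]
and the codeword $(u^2,0,\dots,0)$ has symbol-pair weight $2$. Hence $d_{sp}(\mathcal{C})=2$ for all $0\le i\le 2p^s$.

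Every $i$ in a stated middle interval satisfies $i\le 2p^s-p^{s-\tau}+\wp p^{s-\tau-1}\le 2p^s-p^{s-\tau-1}\le 2p^s-1$. If you read the evident typo $-(\wp-1)$ as $+(\wp-1)$, these intervals tile exactly $p^s+1\le i\le 2p^s-1$. As printed, they even overlap the first case with conflicting values; for example, $\tau=0,\wp=2$ contains $i=p^s$. Either way, the claimed value $2(\wp+1)p^\tau\ge 4$ is wrong for every such $i$. Concretely, take $p=5$, $m=s=1$, $\alpha=\alpha_0=2$, $\beta=1$ and $i=6$ ($\tau=0$, $\wp=1$). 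The statement asserts $d_{sp}=4$, whereas $u^2=(x^2-2)^{10}\in\langle (x^2-2)^6\rangle$ gives $d_{sp}=2$.

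In your plan the break occurs in the lower bound. Since the ring is a chain ring, the $u^2$-layer is
$\mathcal{C}\cap u^2R_{\alpha+u\beta}=\langle (x^2-\alpha_0)^{\max\{i,2p^s\}}\rangle=u^2\langle (x^2-\alpha_0)^{\max\{i-2p^s,0\}}\rangle_{\mathbb{F}_{p^m}}$, so for $k=1$ your $j'$ is $0$. More basically, $j'\le j$ makes the field code larger. The resulting bound $wt_{sp}(e)\ge d_{sp}(\langle (x^2-\alpha_0)^{j'}\rangle_{\mathbb{F}_{p^m}})$ is therefore weaker than the value at $j$, so it cannot certify it. Note also that the stated intervals agree with the field intervals only after a shift by $p^s$. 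That is exactly the $k=1$ layer, where $u^2\in\mathcal{C}$.

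Here is what your method does prove:
\begin{itemize}
\item $d_{sp}(\mathcal{C})=2$ for $0\le i\le 2p^s$.
\item For $2p^s<i<3p^s$ your $k=2$ argument is exact. Here $\mathcal{C}=u^2\langle (x^2-\alpha_0)^{i-2p^s}\rangle$, with the ideal taken in $\mathbb{F}_{p^m}[x]/\langle x^{2p^s}-\alpha\rangle$, because reductions modulo $x^{2p^s}-\alpha-u\beta$ and modulo $x^{2p^s}-\alpha$ agree after multiplying by $u^2$. So $d_{sp}(\mathcal{C})$ is the length-$2p^s$ field distance at $j=i-2p^s$, and the field intervals must be shifted by $2p^s$, not $p^s$.
\item $d_{sp}(\mathcal{C})=0$ at $i=3p^s$.
\end{itemize}

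The paper's own proof has the same defect in its middle case. It writes $\mathcal{C}=\langle u(x^2-\alpha_0)^{i-p^s}\rangle$ and transfers to $(\mathbb{F}_{p^m}+u\mathbb{F}_{p^m})[x]/\langle x^{2p^s}-\alpha\rangle$. But $uR_{\alpha+u\beta}\cong R_{\alpha+u\beta}/u^2R_{\alpha+u\beta}$ retains the twist $x^{2p^s}=\alpha+u\beta$, and under it the image of $\mathcal{C}$ contains $u$. Only the paper's third case, which is your $k=2$, is sound.
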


\begin{proof}
Let $\mathcal{R} = R_3[x]/\langle x^{2p^s} - (\alpha + u\beta) \rangle$. For $p^m \equiv 1 \pmod 4$, $\mathcal{R}$ is a local chain ring with maximal ideal $\langle x^2 - \alpha_0 \rangle$ and nilpotency index $3p^s$. We analyze the distance distribution across three distinct index intervals:

\textbf{Case 1:} Let $0 \leq i \leq p^s$. 
Since $\beta \in \mathbb{F}_{p^m}^{*}$, evaluating the generator at the radical boundary yields $(x^2 - \alpha_0)^{p^s} = x^{2p^s} - \alpha_0^{p^s} = (\alpha + u\beta) - \alpha = u\beta$ so that $ u \in \mathcal{C}$.  Thus, the constant vector $\mathbf{c} = (u, u, \ldots, u) \in \mathcal{C}$. Also $\pi(c)=\Big((u, u), (u, u), \ldots, (u, u)\Big)$, which clearly yields symbol-pair weight of $w_{sp}(c) = n$.  $\mathcal{C}$ also contains codewords of the form $c'(x) = u(x^2 - \alpha_0)^{p^s-1} = u x^{2p^s-2} - \alpha_0 u x^{2p^s-4} + \cdots$. Evaluating the minimal pair support under the condition $u \in \mathcal{C}$ yields that  $d_{sp}(\mathcal{C}) = 2$.

\textbf{Case 2:} Let $p^s + 1 \leq i \leq 2p^s$. 
Since $(x^2 - \alpha_0)^{p^s}$ is an associate of $ u$ in $R_{\alpha + u\beta}$, the ideal generator decomposes as $\mathcal{C} = \langle (x^2 - \alpha_0)^i \rangle = \langle u(x^2 - \alpha_0)^{i - p^s} \rangle$.
Now any codeword satisfies $c(x) = u f(x) (x^2 - \alpha_0)^{i-p^s}$ for some  $f(x) \in \mathbb{F}_{p^m}[x]$. Because $uR_{\alpha + u\beta} \cong (\mathbb{F}_{p^m} + u\mathbb{F}_{p^m})[x]/\langle x^{2p^s} - \alpha \rangle$,  it follows that 
$d_{sp}(\mathcal{C}) = d_{sp}\left(\langle (x^2 - \alpha_0)^{i-p^s} \rangle_{R_2}\right)$.
The distance parameters match the repeated-root bounds over $R_2$ of length $2p^s$ established in \cite{Dinh2023symbol}.

\textbf{Case 3:} Let $2p^s + 1 \leq i \leq 3p^s - 1$. 
The generator shifts beyond the secondary threshold, yielding the ideal reduction
$\mathcal{C} = \langle u^2 (x^2 - \alpha_0)^{i - 2p^s} \rangle$. So 
every codeword  in $\mathcal{C}$ is represented  as $\mathbf{c} = u^2 \mathbf{f}$, where $\mathbf{f} \in \langle (x^2 - \alpha_0)^{i - 2p^s} \rangle_{\mathbb{F}_{p^m}}$. Since $u^2 \neq 0$ in $R_3$,  it is evident that 
\begin{align*}
(c_j, c_{j+1}) &\neq (0,0)\\
 &\iff (u^2 f_j, u^2 f_{j+1}) \neq (0,0)\\
 &\iff (f_j, f_{j+1}) \neq (0,0).
\end{align*}
Hence, $wt_{sp}(\mathbf{c}) = wt_{sp}(\mathbf{f})$, which induces the distance identity:
$d_{sp}(\mathcal{C}) = d_{sp}\left(\langle (x^2 - \alpha_0)^{i - 2p^s} \rangle_{\mathbb{F}_{p^m}}\right)$.
This directly corresponds to the field-level symbol-pair distance profile of length $2p^s$ derived in \cite{Dinh2019symbol}. 

Finally, for $i = 3p^s$, the ideal reduces to $\mathcal{C} = \langle 0 \rangle$, which trivially gives $d_{sp}(\mathcal{C}) = 0$.
\end{proof}
By extending the structural analysis of local chain rings to higher nilpotency configurations, the symbol-pair distance distribution for codes over the ring $R_{\alpha + u\beta + u^2\gamma}$ can be established dually as follows, and in an analogous manner, by mapping the structural configurations to the remaining unit spaces when the shifting parameter is a non-square unit, we obtain the following dual results for the distance profiles:

\begin{theorem}
Let $\mathcal{C} = \langle (x^2 - \alpha_0)^i \rangle$ be an $\alpha$-constacyclic code of length $2p^s$ over the chain ring $R_{\alpha + u\beta + u^2\gamma}$. Then, the minimum symbol-pair distance $d_{sp}(\mathcal{C})$ is identically determined by the parameter classification established in Theorem \ref{3.6}.
\end{theorem}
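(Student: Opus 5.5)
The plan is to reproduce the three-interval argument of Theorem \ref{3.6}, with $\mathcal{R}' = R_3[x]/\langle x^{2p^s} - (\alpha + u\beta + u^2\gamma)\rangle$ in place of $R_{\alpha+u\beta}$. The first step is to compute the radical boundary. Since $p^s$-th powering is additive in characteristic $p$, and $\alpha_0^{p^s}=\alpha$ by the choice of $\alpha_0$,
\[
(x^2-\alpha_0)^{p^s} = x^{2p^s}-\alpha = u\beta + u^2\gamma = u(\beta+u\gamma).
\]

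When $\beta\neq0$, the factor $\beta+u\gamma$ is a unit. Hence $(x^2-\alpha_0)^{p^s}$ is an associate of $u$, and $\mathcal{R}'$ is a chain ring with maximal ideal $\langle x^2-\alpha_0\rangle$ and nilpotency index $3p^s$, exactly as in Case 3 of Section~\ref{subsec:struct}. From this I get the two identities the proof needs:
\[
\langle (x^2-\alpha_0)^{p^s+k}\rangle = \langle u(x^2-\alpha_0)^k\rangle, \qquad \langle (x^2-\alpha_0)^{2p^s+k}\rangle = \langle u^2(x^2-\alpha_0)^k\rangle .
\]

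With these identities, the case analysis transfers verbatim.

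\textbf{Case $0\le i\le p^s$.} Since $u\in\mathcal{C}$, the code contains $u$ times any polynomial. In particular it contains a monomial-type word with symbol-pair weight $2$. Every nonzero word has symbol-pair weight at least $2$, so $d_{sp}=2$.

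\textbf{Case $p^s<i\le 2p^s$.} Every codeword has the form $u f(x)(x^2-\alpha_0)^{i-p^s}$. The map $u\mathbf{g}\mapsto\mathbf{g}$ identifies $u\mathcal{R}'$ with a quotient defined over $R_2$. Because the $u^2\gamma$ term is annihilated after multiplication by $u$ modulo $u^3$, this reduces to the $R_2$ length-$2p^s$ distances of \cite{Dinh2023symbol}, just as before.

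\textbf{Case $2p^s<i<3p^s$.} This is the $u^2$-scaling bijection with $\langle (x^2-\alpha_0)^{i-2p^s}\rangle_{\mathbb{F}_{p^m}}$. It preserves symbol-pair weight coordinate-wise, since $u^2f_j=0$ if and only if $f_j=0$, and that gives the field formula.

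\textbf{Case $i=3p^s$.} Here $\mathcal{C}=\langle0\rangle$.

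The remaining step is to observe that the $\gamma$-term never appears in any of the resulting distance values. It is absorbed in the unit $\beta+u\gamma$ in the middle range, and killed by $u^2$ in the top range. So the piecewise formula coincides with that of Theorem \ref{3.6}.

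The main obstacle is the degenerate subcase $\beta=0$, where $\delta=\alpha+u^2\gamma$. Then $(x^2-\alpha_0)^{p^s}=u^2\gamma$ is no longer an associate of $u$, and the chain structure with index $3p^s$ breaks. I would first try to confirm that the paper's Case 3 implicitly assumes $\beta\neq0$, consistent with its description of the chain $\langle x^2-\alpha_0\rangle$. If it does not, I would handle $\beta=0$ separately. For $\gamma\neq0$ I would use $u^2\in\langle (x^2-\alpha_0)^{p^s}\rangle$ to reduce the low and high ranges as above. The middle range I would treat via a direct lower bound, multiplying by $u$ and then applying Lemma-type monotonicity, which only gives bounds rather than equality. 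I expect that to be the genuinely delicate point.
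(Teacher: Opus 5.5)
Your route is the one the paper intends. The paper gives no separate proof of this statement and only asserts that the argument of Theorem~\ref{3.6} carries over. Your computation $(x^2-\alpha_0)^{p^s}=u(\beta+u\gamma)$, the two ideal identities, and the cases $0\le i\le p^s$ and $2p^s<i\le 3p^s$ are fine. The genuine gap is the middle range $p^s<i\le 2p^s$, together with your closing claim that the result coincides with Theorem~\ref{3.6}.

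Put $k=0$ in your own identity: $(x^2-\alpha_0)^{2p^s}=u^2(\beta+u\gamma)^2=u^2\beta^2$. So every $\mathcal{C}=\langle (x^2-\alpha_0)^i\rangle$ with $i\le 2p^s$ contains $u^2$, i.e.\ the word $(u^2,0,\dots,0)$ of symbol-pair weight $2$. Hence $d_{sp}(\mathcal{C})=2$ for all $0\le i\le 2p^s$.

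Your $R_2$ reduction, done correctly, says the same thing. In $\mathcal{R}'$ we have $\mathrm{Ann}(u)=\langle u^2\rangle$, so $u\mathcal{R}'\cong R_2[x]/\langle x^{2p^s}-(\alpha+u\beta)\rangle$. There $(x^2-\alpha_0)^{p^s}=u\beta$, so the reduced ideal $\langle (x^2-\alpha_0)^{i-p^s}\rangle$ contains $u$. (The proof of Theorem~\ref{3.6} instead identifies $uR_{\alpha+u\beta}$ with $R_2[x]/\langle x^{2p^s}-\alpha\rangle$, dropping the $u\beta$ term; that is where its middle formula comes from.)

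Theorem~\ref{3.6}, however, assigns $2(\wp+1)p^\tau\ge 4$ on $p^s<i<2p^s$. For $p=5$, $s=m=1$, $\alpha=\alpha_0=2$, $\beta=1$, $i=6$ it predicts $4$, yet $u^2=(x^2-2)^{10}\in\mathcal{C}$. So ``just as before'' and ``coincides with Theorem~\ref{3.6}'' cannot be established. Read literally, the statement inherits an error of Theorem~\ref{3.6}.

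What your argument actually proves is the corrected classification:
$d_{sp}(\mathcal{C})=2$ for $0\le i\le 2p^s$;
$d_{sp}(\mathcal{C})=d_{sp}\bigl(\langle (x^2-\alpha_0)^{i-2p^s}\rangle_{\mathbb{F}_{p^m}}\bigr)$, the length-$2p^s$ field value of \cite{Dinh2019symbol}, for $2p^s<i<3p^s$;
and $0$ for $i=3p^s$.
You are right that $\gamma$ never enters, so $R_{\alpha+u\beta+u^2\gamma}$ and $R_{\alpha+u\beta}$ do give the same distances. But they are the values of this corrected list, not of the displayed formula in Theorem~\ref{3.6}. That formula shifts the field intervals by $p^s$ instead of $2p^s$ and leaves $2p^s\le i<3p^s$ uncovered.

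Finally, $\beta=0$ is not the real obstacle. The chain-ring hypothesis already excludes it, since for $\beta=0$ the maximal ideal $\langle u,x^2-\alpha_0\rangle$ is not principal.
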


\begin{theorem}
Let $\mathcal{C}$ be an $\alpha$-constacyclic code of length $2p^s$ over the ambient ring $\mathcal{R}_{\alpha} = R_3[x]/\langle x^{2p^s} - \alpha \rangle$. When $\Delta = \alpha$ is a non-square unit in the finite field $\mathbb{F}_{p^m}$, the  minimum symbol-pair distance $d_{sp}(\mathcal{C})$ coincides with the non-square unit case described in sub-section \ref{3.1} replacing $x^4$ by $x^2$ only.
\end{theorem}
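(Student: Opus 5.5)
The plan is to transport, almost word for word, the arguments of Subsection~\ref{3.1} from the quartic base $x^4-\alpha_0$ to the quadratic base $x^2-\alpha_0$. The first step is structural. Since $\alpha$ is a non-square in $\mathbb{F}_{p^m}$, the polynomial $x^2-\alpha_0$ (with $\alpha_0^{p^s}=\alpha$) is irreducible over $\mathbb{F}_{p^m}$, because $\alpha_0$ is then also a non-square. We also have $x^{2p^s}-\alpha=(x^2-\alpha_0)^{p^s}$. Hence $\mathcal{R}_\alpha$ is local with maximal ideal $\langle u, x^2-\alpha_0\rangle$, and $x^2-\alpha_0$ has nilpotency index $p^s$. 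This is Case 4 of Section~\ref{subsec:struct}. Accordingly, every ideal has one of the eight shapes of Theorem~\ref{thm:dinh0}, with $x^4$ replaced by $x^2$, the units $h_k(x)$ of degree at most $1$, and the invariants $L$, $\omega$, $V$, $c$, $W$ defined exactly as before.

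The second step is the field-level base case, which is the Type 2 argument repeated. For $\mathcal{C}=\langle u^2(x^2-\alpha_0)^l\rangle$, every codeword has the form $u^2f$ with $f\in\langle (x^2-\alpha_0)^l\rangle_{\mathbb{F}_{p^m}}$. The coordinatewise equivalence $u^2f_i=0\iff f_i=0$ gives a bijection $f\mapsto u^2f$ that preserves $wt_{sp}$. Therefore $d_{sp}(\mathcal{C})=d_{sp}(\langle (x^2-\alpha_0)^l\rangle_{\mathbb{F}_{p^m}})$. Here the right-hand side is the known symbol-pair distance of a length-$2p^s$ repeated-root code over a field with irreducible quadratic factor, from \cite{Dinh2019symbol}. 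This replaces the citation of Theorem~\ref{thm:dinh1}.

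The third step runs the sandwich argument for each remaining type. For the upper bound, let $\kappa$ be the smallest exponent with $u^2(x^2-\alpha_0)^\kappa\in\mathcal{C}$; this $\kappa$ is $L$, $\omega$, $V$, $c$, $W$ or $c$ according to the type. Then $\langle u^2(x^2-\alpha_0)^\kappa\rangle\subseteq\mathcal{C}$, so $d_{sp}(\mathcal{C})\le d_{sp}(\langle(x^2-\alpha_0)^\kappa\rangle_{\mathbb{F}_{p^m}})$.

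For the lower bound, take any nonzero $e\in\mathcal{C}$ and let $k\in\{0,1,2\}$ be maximal with $u^ke\neq0$. Then $u^ke$ is a nonzero element of $\mathcal{C}\cap u^2\mathcal{R}_\alpha$, and $wt_{sp}(e)\ge wt_{sp}(u^ke)$, since multiplying by $u$ can only turn nonzero coordinates into zero. The key identity is
\[
\mathcal{C}\cap u^2\mathcal{R}_\alpha=\langle u^2(x^2-\alpha_0)^\kappa\rangle .
\]
It holds because $u^2\mathcal{R}_\alpha\cong\mathbb{F}_{p^m}[x]/\langle (x^2-\alpha_0)^{p^s}\rangle$ is a chain ring in $x^2-\alpha_0$. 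Combining this with the base case gives $wt_{sp}(e)\ge d_{sp}(\langle(x^2-\alpha_0)^\kappa\rangle_{\mathbb{F}_{p^m}})$, and equality follows.

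The main obstacle is verifying this torsion identity for Types 5 through 8. In those types, elements such as $u\cdot\bigl((x^2-\alpha_0)^i+\cdots\bigr)$ and $u^2\cdot(\text{generator})$ produce $u^2$-multiples with exponents involving $i$, $t$, $b$ and $j$. One must check that the minimal such exponent is precisely the stated invariant ($V$, $c$ or $W$). I would argue directly from the definitions: the invariants are \emph{defined} as the minimal exponent with $u^2(x^2-\alpha_0)^\kappa\in\mathcal{C}$, and ideals of the chain ring $u^2\mathcal{R}_\alpha$ are determined by their minimal exponent. This avoids computing them explicitly and makes the argument independent of the degree of the base polynomial, which justifies the claim that only $x^4\mapsto x^2$ changes.
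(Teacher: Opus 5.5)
Your proposal is correct and follows essentially the paper's route. The paper gives no separate proof here, only the analogy with Subsection~\ref{3.1}, whose arguments use your two ingredients: a lower bound by multiplying by $u$, and an upper bound from $\langle u^2(x^2-\alpha_0)^\kappa\rangle\subseteq\mathcal{C}$. Your choice of the maximal $k$ with $u^ke\neq 0$, together with the identity $\mathcal{C}\cap u^2\mathcal{R}_\alpha=\langle u^2(x^2-\alpha_0)^\kappa\rangle$, makes this uniform across types. It also covers the case $ue=0$ that the paper's Type 3 argument skips.

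One point needs spelling out: in Types 4, 6 and 8, $\omega$ and $c$ are exponents of explicit generators, not minimal exponents by definition. So note that $\mathcal{C}\cap u^2\mathcal{R}_\alpha$ is the torsion of the ideal generated by the remaining generators plus $\langle u^2(x^2-\alpha_0)^{\omega}\rangle$ (resp.\ $\langle u^2(x^2-\alpha_0)^{c}\rangle$). Combined with $\omega<L$, $c<V$, $c<W$, this gives $\kappa=\omega$ or $c$.
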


\section{Symbol-Pair Singleton Bound and MDS Symbol-Pair Codes}
This section establishes the criteria for $\alpha$-constacyclic ideals of length $4p^s$ over the finite commutative chain ring $R_3$ to constitute Maximum Distance Separable (MDS) symbol-pair codes. We evaluate the symbol-pair singleton bound constraint $|\mathcal{C}| \le p^{3m(4p^s - d_{sp}(\mathcal{C}) + 2)}$ against the algebraic cardinalities $|\mathcal{C}| = p^{4m[3p^s - \sum T_i(\mathcal{C})]}$ determined by the torsional degrees of the respective ideal types. By incorporating the explicit symbol-pair distance distributions derived in Section 3, we systematically test for bound saturation across all eight structural classifications.

\begin{theorem}[Symbol-Pair Singleton Bound \cite{Chen2013}]
Let $\mathcal{C}$ be a linear symbol-pair code of length $n$ over a finite chain ring $R_3$. Then the code size is bounded by:
\[
|\mathcal{C}| \leq |R_3/\text{Rad}(R_3)|^{3(n - d_{sp}(\mathcal{C}) + 2)}.
\]
\end{theorem}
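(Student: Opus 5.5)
The plan is a puncturing (projection) argument, exactly as in the field case of \cite{Chen2013}. Throughout, $|R_3/\mathrm{Rad}(R_3)| = |\mathbb{F}_{p^m}| = p^m$ and $|R_3| = p^{3m}$. The claimed bound is therefore equivalent to
\[
|\mathcal{C}| \le |R_3|^{\,n-d_{sp}(\mathcal{C})+2}.
\]
I would first prove this form and then translate it back. Write $d = d_{sp}(\mathcal{C})$, and assume $\mathcal{C} \neq \langle 0 \rangle$ so that $d$ is defined.

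\textbf{Step 1 (trivial range).} If $d \le 2$, then $n-d+2 \ge n$. The bound then follows from $\mathcal{C} \subseteq R_3^n$, so $|\mathcal{C}| \le |R_3|^n$. Hence we may assume $3 \le d \le n$; the upper bound $d \le n$ holds because any pair weight is at most $n$.

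\textbf{Step 2 (projection).} Consider the map $\phi : \mathcal{C} \to R_3^{\,n-d+2}$ that keeps the coordinates $c_0, c_1, \ldots, c_{n-d+1}$. The goal is to show $\phi$ is injective, which gives the bound at once.

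By linearity it suffices to take a nonzero codeword $\mathbf{e} \in \mathcal{C}$ with $\phi(\mathbf{e}) = 0$ and derive a contradiction. Such an $\mathbf{e}$ is supported inside the block of positions $n-d+2, \ldots, n-1$. This block consists of $d-2 \ge 1$ consecutive coordinates and is a proper sub-arc of the cycle $\mathbb{Z}_n$, since $n-d+2 \ge 2$.

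\textbf{Step 3 (pair-weight count).} Next I bound the pair weight of $\mathbf{e}$. A pair $(e_i, e_{i+1})$ can be nonzero only if $i$ or $i+1$ lies in the block. The set of such $i$ is the block itself together with the single index immediately preceding it, namely $n-d+1$. Hence
\[
wt_{sp}(\mathbf{e}) \le (d-2)+1 = d-1 < d,
\]
which contradicts $\mathbf{e} \neq \mathbf{0}$ and the definition of $d$ as the minimum pair weight of a linear code. So $\phi$ is injective, and
\[
|\mathcal{C}| \le |R_3|^{n-d+2} = p^{3m(n-d+2)} = |R_3/\mathrm{Rad}(R_3)|^{3(n-d+2)}.
\]

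\textbf{Main obstacle.} The only delicate point is the counting in Step 3, where cyclic wrap-around must be handled correctly. One has to confirm that the zero region $\{0, \ldots, n-d+1\}$ contains at least two indices, so the support block does not close up around the cycle. This is what guarantees that only one extra pair index beyond the block is touched.

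The ring structure plays no role beyond the cardinality $|R_3| = (p^m)^3$. In particular, linearity is used only to reduce injectivity of $\phi$ to the statement that its kernel is zero.
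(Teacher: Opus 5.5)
Your proof is correct. A nonzero codeword that vanishes on the first $n-d+2$ coordinates is supported on a run of $d-2$ consecutive positions, so its pair weight is at most $d-1$. The identity $|R_3/\mathrm{Rad}(R_3)|^{3}=|R_3|$ then turns the statement into the standard bound $|\mathcal{C}|\le |R_3|^{\,n-d_{sp}(\mathcal{C})+2}$. The paper gives no proof of its own and simply imports the bound from \cite{Chen2013}, where the Singleton bound is obtained by essentially this puncturing count; as you noted, the argument does not even need linearity if you compare two codewords that agree on the kept coordinates.
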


\begin{definition}
A symbol-pair code $\mathcal{C}$ over $R_3$ is defined as a maximum distance separable (MDS) symbol-pair code if it satisfies the symbol-pair Singleton bound with equality.
\end{definition}

First we classify the existence of MDS symbol-pair codes of length $4p^s$ over $R_3$ when $\Delta$ is matching with Case 1 or Case 2 in Table 1. For the trivial zero code $\mathcal{C} = \langle 0 \rangle$, the code size is $|\mathcal{C}| = 1$, whereas $d_{sp}(\mathcal{C})=0$. The MDS condition requires $1 = p^{3m(4p^s - 0 + 2)}$, which reduces to $4p^s + 2 = 0$, an impossibility. Hence, $\langle 0 \rangle$ is not MDS.

\begin{theorem}
The Type 1 trivial code $\mathcal{C} = \langle 1 \rangle$ of length $4p^s$ over $R_3$ is always an MDS symbol-pair code.
\end{theorem}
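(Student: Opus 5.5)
The plan is to check directly that $\langle 1\rangle$ meets the symbol-pair Singleton bound \cite{Chen2013} with equality, exactly as normalized in this section. This requires computing two quantities, the cardinality $|\mathcal{C}|$ and the distance $d_{sp}(\mathcal{C})$, and then substituting them into the bound.

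\textbf{Cardinality.} Since $\mathcal{C}=\langle 1\rangle = R_\alpha$, every vector of $R_3^{n}$ with $n=4p^s$ is a codeword. Hence $|\mathcal{C}| = |R_3|^{n} = (p^{3m})^{4p^s} = p^{12mp^s}$. I will cross-check this against the torsional-degree formula $|\mathcal{C}| = p^{4m[3p^s-\sum T_i(\mathcal{C})]}$ quoted above. For $\langle 1\rangle$ all torsional degrees vanish, so it gives the same value $p^{12mp^s}$.

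\textbf{Distance.} I will show $d_{sp}(\langle 1\rangle)=2$ directly rather than relying on the earlier remark. Take a nonzero $\mathbf{v}\in R_3^n$ and pick $i$ with $v_i\neq 0$. The two pairs $(v_{i-1},v_i)$ and $(v_i,v_{i+1})$ are both nonzero, and they occupy distinct positions of $\pi(\mathbf{v})$ because $n=4p^s\ge 12>1$. Hence $wt_{sp}(\mathbf{v})\ge 2$. The vector $\mathbf{e}_0=(1,0,\ldots,0)\in\mathcal{C}$ has exactly the two nonzero pairs $(v_{n-1},v_0)=(0,1)$ and $(v_0,v_1)=(1,0)$, so the bound is attained.

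\textbf{Substitution.} With $|R_3/\mathrm{Rad}(R_3)| = |\mathbb{F}_{p^m}| = p^m$, the right-hand side of the bound is
\[
p^{3m(n-d_{sp}(\mathcal{C})+2)} = p^{3m(4p^s-2+2)} = p^{12mp^s} = |\mathcal{C}|,
\]
so equality holds and $\langle 1\rangle$ is MDS for every $p$, $m$ and $s$.

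The only delicate point is bookkeeping. One must use the bound in the normalization stated in this section, exponent $3m(n-d+2)$, so that it is consistent with $|R_3|=p^{3m}$. One must also make sure the cardinality of $\langle 1\rangle$ is read off as $|R_3|^n$ and not through a mis-specified torsion formula. Once both sides are written as powers of $p$, equality is immediate.
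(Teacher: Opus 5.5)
Your proposal is correct and follows the paper's proof: compute $|\langle 1\rangle| = |R_3|^{4p^s} = p^{12mp^s}$ and $d_{sp}(\langle 1\rangle)=2$, then substitute into $p^{3m(4p^s-d_{sp}+2)}$ to get equality. You add two things the paper omits. First, you verify $d_{sp}(\langle 1\rangle)=2$ explicitly, using the two pairs around a nonzero coordinate and the vector $\mathbf{e}_0$, where the paper just asserts it. Second, you cross-check the cardinality against the torsional-degree formula. Both additions are sound.
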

\begin{proof}
For $\mathcal{C} = \langle 1 \rangle$, we have $|\mathcal{C}| = |R_3|^{4p^s} = p^{12mp^s}$ and $d_{sp}(\mathcal{C}) = 2$. Substituting these parameters into the Singleton equation yields:
\[
p^{12mp^s} = p^{3m(4p^s - 2 + 2)} = p^{12mp^s},
\]
which holds trivially. Thus, $\mathcal{C} = \langle 1 \rangle$ is an MDS symbol-pair code.
\end{proof}

\begin{theorem}
Let $\mathcal{C}_{2} = \langle u^2(x^4-\alpha_{0})^l\rangle$ be a constacyclic code of length $4p^s$, where $0 \leq l \leq p^s-1$. If the cardinality of the code is given by $|\mathcal{C}_2| = p^{4m(p^s-l)}$, then $\mathcal{C}_2$ cannot be a MDS symbol-pair code.
\end{theorem}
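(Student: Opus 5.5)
The plan is to argue by contradiction from the Singleton bound of \cite{Chen2013}, using the size hypothesis $|\mathcal{C}_2| = p^{4m(p^s-l)}$ together with the distance formula for Type 2 codes established in Section~\ref{3.1}. Suppose $\mathcal{C}_2$ is MDS. Since $|R_3/\mathrm{Rad}(R_3)| = p^m$ and $n = 4p^s$, equality in the bound reads
\[
p^{4m(p^s-l)} = p^{3m(4p^s - d_{sp}(\mathcal{C}_2) + 2)}.
\]
Comparing exponents gives $4(p^s-l) = 3(4p^s - d_{sp}(\mathcal{C}_2) + 2)$, that is,
\[
3\,d_{sp}(\mathcal{C}_2) = 8p^s + 4l + 6 .
\]

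Next I bound $d_{sp}(\mathcal{C}_2)$ from above. By the Type 2 theorem, $d_{sp}(\mathcal{C}_2) = d_{sp}(\langle (x^4-\alpha_0)^l\rangle_{\mathbb{F}_{p^m}})$, and Theorem~\ref{thm:dinh1} gives this value explicitly. A crude bound is enough. For any code of length $n$ we have $d_{sp} \le 2d_H \le 2n$; more usefully, $\langle(x^4-\alpha_0)^l\rangle$ contains $(x^4-\alpha_0)^l$ itself, whose coordinates vanish outside degrees $\le 4l$. Its support lies in at most $l+1$ positions (the exponents $4k$ with $0\le k\le l$), which gives symbol-pair weight at most $2(l+1)$. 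Hence
\[
d_{sp}(\mathcal{C}_2) \le 2l+2 .
\]

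Substituting into the MDS equation gives $8p^s + 4l + 6 = 3d_{sp} \le 6l + 6$, so $8p^s \le 2l$. This is impossible because $l \le p^s - 1$. Therefore equality in the Singleton bound fails, and $\mathcal{C}_2$ is not MDS. As a cross-check, one can also plug in the explicit values $2$ (for $l=0$) and $2(\beta+2)p^\tau$ from Theorem~\ref{thm:dinh1}. At $l=0$ the equation would force $6 = 8p^s+6$, which is absurd.

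The only delicate step is the upper bound on $d_{sp}$. The paper's quoted formula has irregular ranges, so I prefer the direct weight estimate from the generator polynomial, which avoids the case analysis entirely. That estimate requires the generator to be nonzero of degree $4l < 4p^s$, which holds since $l \le p^s-1$, and the argument is independent of the exact distance values.
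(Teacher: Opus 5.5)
Your proof is correct, and it reaches the contradiction by a different route from the paper's. Both arguments start from the same exponent identity $3d_{sp}(\mathcal{C}_2)=8p^s+4l+6$, and both then need an upper bound on $d_{sp}(\mathcal{C}_2)$ in terms of $l$.

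The paper treats $l=0$ separately. For $l>0$ it takes the explicit value $d_{sp}(\mathcal{C}_2)=2(\beta+1)p^\tau$ from the field-code formula. It then pushes the interval condition $l\ge p^s-p^{s-\tau}+(\beta-1)p^{s-\tau-1}+1$ through a chain of estimates to get $l\ge d_{sp}(\mathcal{C}_2)-p^s-1$. Combining this with the identity forces $d_{sp}(\mathcal{C}_2)\le -4p^s-2$.

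You instead exhibit the nonzero codeword $u^2(x^4-\alpha_0)^l$, whose Hamming weight is at most $l+1$. The inequality $wt_{sp}\le 2\,wt_H$ then gives $d_{sp}(\mathcal{C}_2)\le 2l+2$. Since $l\le p^s-1$, this is at least as strong as the paper's $d_{sp}(\mathcal{C}_2)\le l+p^s+1$. It also covers $l=0$ without a case split, and it never uses the $(\tau,\beta)$ intervals or the exact distance values.

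That independence matters here. Your bound shows that the value $2(\beta+2)p^\tau$ quoted in Theorem~\ref{thm:dinh1} cannot be right: take $s=1$, $\tau=0$, $l=\beta$, and the quoted value $2\beta+4$ exceeds your bound $2\beta+2$. You mention that value for your cross-check, so it is worth knowing it is wrong. The paper's own proof silently uses $2(\beta+1)p^\tau$ instead, and cites Theorem~\ref{thm:dinh2} for it.

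The paper's route buys only uniformity with its later Type 3--5 arguments. Your generator-weight estimate appears to settle those cases just as directly. Apply it to the element $u^2(x^4-\alpha_0)^k$ that each of those codes contains, with $k=L,\omega,V$ respectively, and combine with the cardinalities the paper states.
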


\begin{proof}
Suppose, for the sake of contradiction, that $\mathcal{C}_2$ is an MDS symbol-pair code. We consider two  cases based on the index $l$:

\textbf{Case 1:} Assume $l=0$. Under this condition,   $\mathcal{C}_2 = \langle u^2 \rangle$. Following the trivial bound established in the preceding theorem, the cardinality required to meet the MDS singleton bound cannot be satisfied by $|\mathcal{C}_2| = p^{4mp^s}$. Thus, no MDS code exists in this trivial instance.

\textbf{Case 2:} Assume $l > 0$. Specifically, let  $p^s - p^{s-\tau} + (\beta-1)p^{s-\tau-1} + 1 \leq l \leq p^s - p^{s-\tau} + \beta p^{s-\tau-1}$. By Theorem \ref{thm:dinh2}, the symbol-pair distance is exactly $d_{sp}(\mathcal{C}_2) = 2(\beta+1)p^\tau$.

In order for $\mathcal{C}_2$ to achieve the MDS bound, its cardinality must satisfy the singleton bound $|\mathcal{C}_2| = p^{3m(4p^s - d_{sp}(\mathcal{C}_2) + 2)}$. Substituting the known cardinality $|\mathcal{C}_2| = p^{4m(p^s-l)}$ into this equation yields $p^{4m(p^s-l)} = p^{3m(4p^s - d_{sp}(\mathcal{C}_2) + 2)}$.  By equating the exponents and distributing the coefficients, we obtain:

\begin{align}
&4(p^s - l) = 3(4p^s - d_{sp}(\mathcal{C}_2) + 2)\nonumber\\
&4p^s - 4l = 12p^s - 3d_{sp}(\mathcal{C}_2) + 6\nonumber\\
&4l = 3d_{sp}(\mathcal{C}_2) - 8p^s - 6\label{*}.
\end{align}
Next, 

$$\begin{aligned}
l &\ge p^s - p^{s-\tau} + (\beta - 1)p^{s-\tau-1} + 1 \\
&= 2p^s - p^{s-\tau} + (\beta - 1)p^{s-\tau-1} + 1 - p^s \\
&\ge p^{\tau+1} + p^{s-\tau}(p^\tau - 1) + (\beta - 1) + 1 - p^s \quad \text{(equality holds when } \tau = s - 1\text{)} \\
&\ge 2(\beta + 1)p^\tau - p^s - 1 \quad \text{(equality holds when } \beta = p - 1\text{)} \\
&= d_{sp}(\mathcal{C}_2) - p^s - 1
\end{aligned}$$

Therefore $4l \ge 4d_{sp}(\mathcal{C}_2) - 4p^s - 4$ and by Equation \ref{*},  $3d_{sp}(\mathcal{C}_2) - 8p^s - 6 \ge 4d_{sp}(\mathcal{C}_2) - 4p^s - 4$. This gives 
\begin{align*}
d_{sp}(\mathcal{C}_2) \ge 4p^s + 2 \;\textrm{or} \;d_{sp}(\mathcal{C}_2) \le -4p^s - 2.
\end{align*}
Since $p$ is a prime number and the power $s \ge 1$, the expression $-4p^s - 2$ strictly evaluates to a negative integer. However, for any valid non-trivial code, the symbol-pair distance $d_{sp}(\mathcal{C}_2)$ must be a positive integer. This presents a mathematical impossibility. Therefore, by contradiction, no MDS symbol-pair code exists under these constraints.
\end{proof}

\begin{theorem}\label{thm:type3_mds}
If $\mathcal{C}_{3}=  \langle
u(x^4-\alpha_{0})^l+u^2(x^4-\alpha_{0})^{t}h(x)\rangle$ ; where
$0\leq L \leq l \leq (p^s-1), \;0 \leq t\leq L$, either $h(x)$ is
$0$ or $h(x)$ is a unit in $R_ \alpha $ of the form $\sum
_{k=0}^{L-t-1}h_{k}(x)(x^4-\alpha_{0})^k$ with $ h_k(x)=h_{3,k}x^3 +
h_{2,k}x^2 + h_{1,k}x + h_{0,k} \in \mathbb{F}_{p^m}[x] $ and
$h_{0}\neq 0$. Here $L$ is the smallest integer satisfying $u^2(x^4
- \alpha_0)^L\in\mathcal{C}_3$
 with
$n_{c}=p^{4m(2p^{s}-l-L)}$. Then there exist no MDS symbol-pair codes $\mathcal{C}_{3}$ for $L$ satisfying 
\[ L=
\begin{cases}
l, & \text{if } h(x)=0 \\
min\{l,p^s-l+t\}, & \text{if } h(x) \neq 0 \\
\end{cases}
\]

\begin{proof}
Assume, for the sake of contradiction, that $\mathcal{C}_3$ is an MDS symbol-pair code. Its cardinality must therefore satisfy the symbol-pair singleton bound: $|\mathcal{C}_3| = p^{3m(4p^s-d_{sp}(\mathcal{C}_3)+2)}$ so that 
Equating this with the known size of the code, we obtain:
$p^{4m(2p^s-l-L)} = p^{3m(4p^s-d_{sp}(\mathcal{C}_3)+2)}$. Therefore
\begin{align}
4(2p^s-l-L) &= 3(4p^s-d_{sp}(\mathcal{C}_3)+2) \nonumber\\
8p^s - 4l - 4L &= 12p^s - 3d_{sp}(\mathcal{C}_3) + 6 \nonumber\\
4L &= 3d_{sp}(\mathcal{C}_3) - 4p^s - 4l - 6 \label{*}
\end{align}

Now for the code to achieve a non-trivial symbol-pair distance $d_{sp}(\mathcal{C}_3) = 2(\beta+1)p^\tau$, the  index $L$ must  satisfy
$L \ge p^s - p^{s-\tau} + (\beta-1)p^{s-\tau-1} + 1$. Also  for  $\tau = s-1$ and $\beta = p-1$ we have 
\begin{align*}
L &\ge 2(\beta+1)p^\tau - p^s - 1 \\
L &\ge d_{sp}(\mathcal{C}_3) - p^s - 1,\\
\textrm{that is, } \;4L &\ge 4d_{sp}(\mathcal{C}_3) - 4p^s - 4. 
\end{align*}
Substituting the value of $4L$ in Equation \ref{*} we now have
\begin{align*}
3d_{sp}(\mathcal{C}_3) - 4p^s - 4l - 6 &\ge 4d_{sp}(\mathcal{C}_3) - 4p^s - 4 \\
-d_{sp}(\mathcal{C}_3) - 4l &\ge 2 \\
d_{sp}(\mathcal{C}_3) + 4l &\le -2.
\end{align*}
Since $d_{sp}(\mathcal{C}_3) \ge 2$ and $l \ge 0$, the sum $d_{sp}(\mathcal{C}_3) + 4l$ must be strictly positive, which directly contradicts the requirement $d_{sp}(\mathcal{C}_3) + 4l \le -2$. This contradiction holds universally for both $h(x) = 0$ and $h(x) \neq 0$. Consequently, no such MDS symbol-pair code exists.
\end{proof}

\end{theorem}

\begin{theorem}
$\mathcal{C}_{4}= \langle
u(x^4-\alpha_{0})^l+u^2(x^4-\alpha_{0})^{t}h(x),u^2(x^4-\alpha_{0})^{\omega}
\rangle$  ;  where $p^s> l\geq L>\omega>t\geq 0$ either $h(x)$ is
zero or a unit in $R_ \alpha $ of the form $\sum
_{k=0}^{\omega-t-1}h_{k}(x)(x^4-\alpha_{0})^k$ with $
h_k(x)=h_{3,k}x^3 + h_{2,k}x^2 + h_{1,k}x + h_{0,k} \in
\mathbb{F}_{p^m}[x] $ and $h_{0}\neq 0$. Here $L$ is the smallest
integer satisfying $u^2(x^4 - \alpha_0)^L\in\mathcal{C}_3$ with
$n_{c}=p^{4m(2p^{s}-l-\omega)}$. Then there does not exists any MDS
symbol-pair codes.
\end{theorem}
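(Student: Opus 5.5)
We argue by contradiction, following the template of Theorem~\ref{thm:type3_mds}. Suppose $\mathcal{C}_4$ is an MDS symbol-pair code. Then its cardinality $n_c = p^{4m(2p^s-l-\omega)}$ must equal the Singleton value $p^{3m(4p^s-d_{sp}(\mathcal{C}_4)+2)}$. Equating exponents gives
\[
4(2p^s-l-\omega)=3(4p^s-d_{sp}(\mathcal{C}_4)+2),
\]
that is,
\[
4\omega = 3d_{sp}(\mathcal{C}_4)-4p^s-4l-6 .
\]

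By the Type 4 distance theorem of Section~3, $d_{sp}(\mathcal{C}_4)=d_{sp}\left(\langle(x^4-\alpha_0)^{\omega}\rangle_{\mathbb{F}_{p^m}}\right)$, and we evaluate this with Theorem~\ref{thm:dinh1} at $j=\omega$. We split into two cases.

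\textbf{Case $\omega=0$.} Here $d_{sp}(\mathcal{C}_4)=2$, and the relation becomes $0=-4p^s-4l$. This is impossible because $p^s\ge 1$.

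\textbf{Case $\omega\ge 1$.} Now $\omega$ lies in a block
\[
p^s-p^{s-\tau}+(\beta-1)p^{s-\tau-1}+1\le \omega\le p^s-p^{s-\tau}+\beta p^{s-\tau-1},
\]
and $d_{sp}(\mathcal{C}_4)$ is the corresponding value $2(\beta+2)p^\tau$. We compare it with $\omega$ the same way as in the Type 2 and Type 3 proofs. The cleanest route is the crude bound
\[
d_{sp}(\mathcal{C}_4)\le 2p^{\tau+1}\le 2p^s,
\]
which holds since $\beta+2\le p+1$ and $\tau\le s-1$. (It is slightly off when $\beta=p-1$; if so we use the symbol-pair bound $d_{sp}\le 2d_H$ instead.) Substituting gives
\[
4\omega\le 6p^s-4p^s-4l-6=2p^s-4l-6 .
\]
Since $l>\omega$, this yields $8\omega<2p^s-6$. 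It is cleaner still to bound $d_{sp}$ directly by the trivial estimate $d_{sp}(\mathcal{C}_4)\le n=4p^s$, which gives
\[
4\omega\le 8p^s-4l-6,
\]
and this is not contradictory on its own. So the main obstacle is a sharp enough upper bound on $d_{sp}$ in terms of $\omega$.

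For that bound we reuse the inequality from the proof of Theorem~\ref{thm:type3_mds}, now with $\omega$ in place of $L$:
\[
\omega\ge d_{sp}(\mathcal{C}_4)-p^s-1 .
\]
Multiplying by 4 and substituting, we get
\[
3d_{sp}-4p^s-4l-6\ge 4d_{sp}-4p^s-4,
\]
that is,
\[
d_{sp}(\mathcal{C}_4)+4l\le -2 .
\]
This is impossible since $d_{sp}(\mathcal{C}_4)\ge 2$ and $l\ge 0$. Combining the two cases, no MDS Type 4 code exists.

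The delicate point is justifying $\omega\ge d_{sp}-p^s-1$ uniformly over all $(\tau,\beta)$. I would check it first at the extreme block $\tau=s-1$, $\beta=p-1$, where $\omega$ is smallest relative to $d_{sp}$. The remaining blocks then follow because the left endpoint grows at least as fast as the distance.

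Finally, note that the hypothesis $l>\omega$ strengthens the contradiction: even if we only had $l\ge 0$, the conclusion would still follow.
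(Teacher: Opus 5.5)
Your strategy is the paper's own: equate exponents to get $4\omega=3d_{sp}(\mathcal{C}_4)-4p^s-4l-6$, then use $\omega\ge d_{sp}(\mathcal{C}_4)-p^s-1$ to reach $d_{sp}(\mathcal{C}_4)+4l\le -2$. The gap is that this key inequality is never proved, and with the value you committed to, $d_{sp}=2(\beta+2)p^\tau$, it is false. Take $p=5$, $s=1$, $\omega=3$, which is admissible with $l=L=4$. This lies in the block $\tau=0$, $\beta=3$; your formula gives $d_{sp}=10$, so $d_{sp}-p^s-1=4>3=\omega$. More generally the inequality fails whenever $\tau=s-1$ and $\beta\in\{p-2,p-1\}$. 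In particular it fails at the extreme block you propose to check first. Your reason for the remaining blocks is also wrong: for $\tau=s-1$, raising $\beta$ by one moves the left endpoint by $1$ but moves the distance by $2p^{s-1}$.

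The $(\beta+2)$ in Theorem~\ref{thm:dinh1} is a misprint. The paper's proof of this theorem and the appendix table both use $2(\beta+1)p^\tau$. As a check, $x^4-\alpha_0$ is a codeword with two non-adjacent nonzero entries, so $d_{sp}(\langle x^4-\alpha_0\rangle_{\mathbb{F}_{p^m}})\le 4$, not $6$. With the correct value, $\omega-(d_{sp}-p^s-1)\ge 2p^s-p^{s-\tau}+(\beta-1)p^{s-\tau-1}+2-2(\beta+1)p^\tau$. When $\tau=s-1$ this equals $(p-1-\beta)(2p^{s-1}-1)\ge 0$. When $\tau\le s-2$ it is at least $p^s-2p^{s-1}+2>0$, because $2(\beta+1)p^\tau\le 2p^{s-1}$. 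With that, your computation goes through. A cleaner route avoids the block formula altogether. The codeword $u^2(x^4-\alpha_0)^\omega\in\mathcal{C}_4$ has at most $\omega+1$ nonzero coordinates, all at positions divisible by $4$ in a cycle of length $4p^s$, so they are pairwise non-adjacent and $d_{sp}(\mathcal{C}_4)\le 2(\omega+1)$. Substituting into $3d_{sp}=4\omega+4l+4p^s+6$ gives $4l+4p^s\le 2\omega$, which is impossible since $l>\omega$. Finally, your case $\omega=0$ is vacuous, since $\omega>t\ge 0$.
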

\begin{proof}
Assume $C$ is an MDS symbol-pair code.
If $p^s-p^{s-\tau} +(\beta-1) p^{s-\tau -1} + 1 \leq \omega \leq
p^s-p^{s-\tau} +\beta p^{s-\tau -1}$ then by Theorem \ref{thm:dinh1}
$\mathrm{d}_{\mathrm{sp}}(C)=2(\beta+1)p^\tau$. To be MDS $C$ have
to satisfy $|C|=p^{3m(4p^s-\mathrm{d}_{\mathrm{sp}}(C)}+2)$. This gives 
\begin{align*}
&p^{4m(2p^s-l-\omega)}=p^{3m(4p^s-\mathrm{d}_{\mathrm{sp}}(C)+2)},\\ \;\textrm{and so} 
\;&4\omega=3\mathrm{d}_{\mathrm{sp}}-4p^s-6-4l.
\end{align*} 
Now  we have
\[
\begin{aligned}
\omega
&\ge p^s - p^{s{-\tau}} + (\beta - 1)p^{s{-\tau-1}} + 1 \\
\omega
&\ge p^s +p^s - p^{s{-\tau}} + (\beta - 1)p^{s{-\tau-1}} + 1-p^s \\
&\ge p^{\tau+1}+p^{s{-\tau}}(p^{\tau}-1) + (\beta - 1)+1-p^s
\qquad \text{(equality when } \tau = s - 1\text{)} \\
&\ge 2(\beta + 1)p^{\tau} - p{^s}-1
\qquad \text{(equality when } p - 1 = \beta\text{)} \\
&>\mathrm{d}_{\mathrm{sp}}(C)- p^s - 1 \\
& \textrm{that is}, \;4\omega > 4\mathrm{d}_{\mathrm{sp}}-4p^s-4 \\
& 3\mathrm{d}_{\mathrm{sp}}-8p^s-6-4l>4\mathrm{d}_{\mathrm{sp}}-4p^s-4.
\\
\end{aligned}
\]
Therefore  $\mathrm{d}_{\mathrm{sp}}(C)+4l<-2$. This  is
not possible as p is a prime and $l\geq 0$.  So in this case no MDS code exists.
\end{proof}

\begin{theorem}
There do not exist any non-trivial MDS symbol-pair codes among the above \textit{Type 5} of $\alpha$-constacyclic codes of length $4p^s$ over $R_3$.
\end{theorem}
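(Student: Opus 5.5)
We argue by contradiction, in the same way as for Types 2--4. Suppose some Type 5 code $\mathcal{C}_5 = \langle (x^4-\alpha_0)^i + u(x^4-\alpha_0)^t h_1(x) + u^2(x^4-\alpha_0)^z h_2(x)\rangle$ is a non-trivial MDS symbol-pair code. We need two ingredients: the size of $\mathcal{C}_5$ and its symbol-pair distance.

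For the size, use the torsional degrees of the chain $\mathcal{C}_5 \supseteq \mathrm{Tor}_1 \supseteq \mathrm{Tor}_2$, with thresholds $i$, $U$, $V$. This gives
\[
|\mathcal{C}_5| = p^{4m(3p^s - i - U - V)}.
\]
For the distance, the theorem on Types 5--8 gives $d_{sp}(\mathcal{C}_5) = d_{sp}(\langle (x^4-\alpha_0)^V\rangle_{\mathbb{F}_{p^m}})$. Since $0 < V \le p^s-1$, Theorem~\ref{thm:dinh1} applies. So there are $\tau$ and $\beta$ with $p^s-p^{s-\tau}+(\beta-1)p^{s-\tau-1}+1 \le V \le p^s-p^{s-\tau}+\beta p^{s-\tau-1}$, and $d_{sp}(\mathcal{C}_5) = 2(\beta+2)p^\tau$. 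In particular $2 < d_{sp}(\mathcal{C}_5) \le 2(p+1)p^{s-1} < 4p^s$.

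Now impose equality in the symbol-pair Singleton bound and compare exponents:
\[
4(3p^s - i - U - V) = 3\bigl(4p^s - d_{sp}(\mathcal{C}_5) + 2\bigr),
\]
which is equivalent to
\[
3d_{sp}(\mathcal{C}_5) = 4(i+U+V) + 6.
\]
We then bound both sides in terms of $V$, using $i \ge U \ge V$. The right-hand side is at least $12V + 6$. For the left-hand side we compare $d_{sp}$ with $V$ on the interval fixed by $(\tau,\beta)$. The lower endpoint of that interval is $V \ge p^{s-\tau-1}\bigl(p^{\tau+1}-p+\beta-1\bigr)+1$. When $\tau = s-1$ this gives $V \ge p^s - p + \beta$, while $d_{sp} = 2(\beta+2)p^{s-1}$. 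In general we expect $d_{sp}(\mathcal{C}_5) < 4V + 2$, and this would give
\[
3d_{sp}(\mathcal{C}_5) < 12V + 6 \le 4(i+U+V) + 6,
\]
a contradiction. The trivial cases ($\langle 1\rangle$, which is Type 1, and the zero code) are excluded by hypothesis.

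The main obstacle is proving the inequality $d_{sp}(\mathcal{C}_5) < 4V + 2$ uniformly over all $(\tau,\beta)$. For large $\tau$ it should follow because $V$ is close to $p^s$ while $d_{sp} \le 2(p+1)p^{s-1}$. For small $\tau$, where $d_{sp}$ is small and $V$ can also be small (for example $\tau = 0$, $V \approx \beta p^{s-1}$, $d_{sp} = 2(\beta+2)$), the inequality looks easy but must be checked directly. If it fails at a boundary case, the fallback is parity and divisibility. From $3d_{sp} = 4(i+U+V) + 6$ we get $d_{sp} \equiv 2 \pmod 4$, and reducing modulo $3$ gives $i+U+V \equiv 0 \pmod 3$. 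We would combine these with the explicit form $d_{sp} = 2(\beta+2)p^\tau$ and with $i+U+V \ge 3V$ to rule out the remaining cases.
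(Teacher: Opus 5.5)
Your skeleton is sound, and it is a streamlined version of the paper's argument. You use the size $p^{4m(3p^s-i-U-V)}$, the exponent equation $3d_{sp}(\mathcal{C}_5)=4(i+U+V)+6$, and the bound $i+U+V\ge 3V$; this replaces the paper's case split ($U=l$ versus $U<l$, then $V=l$ versus $V<l$).

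The gap is the decisive inequality $d_{sp}(\mathcal{C}_5)<4V+2$. You leave it as ``we expect'', and with the value you took from Theorem~\ref{thm:dinh1} it is false. Take $V=1$, which lies in the $\tau=0$, $\beta=1$ interval $[1,p^{s-1}]$. The formula $2(\beta+2)p^\tau$ gives $d_{sp}=6=4V+2$. The exponent equation $18=4(i+U+V)+6$ is then solved by $i=U=V=1$, which is realised by the Type 5 code $\langle x^4-\alpha_0\rangle$. Your parity/divisibility fallback cannot exclude this case, since $6\equiv 2 \pmod 4$ and $i+U+V=3\equiv 0 \pmod 3$. Taken literally, that formula would make $\langle x^4-\alpha_0\rangle$ an MDS symbol-pair code, so case-checking along your lines cannot finish the proof.

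The missing idea is an honest upper bound on $d_{sp}(\mathcal{C}_5)$ in place of a formula. The $2(\beta+2)$ in Theorem~\ref{thm:dinh1} is a misprint for $2(\beta+1)$. Section 4 of the paper uses $2(\beta+1)p^\tau$ throughout, and the appendix table lists $d_{sp}=4$ for $\langle u^2(x^4-\alpha_0)\rangle$.

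You can bypass the formula entirely:
\begin{enumerate}
\item By definition of $V$, $u^2(x^4-\alpha_0)^V\in\mathcal{C}_5$.
\item As a polynomial of degree $4V<4p^s$ it is already reduced and nonzero. It has at most $V+1$ nonzero coefficients, all at positions divisible by $4$, and these are pairwise non-adjacent in the cyclic order.
\item Hence its pair weight is at most $2(V+1)$, so $d_{sp}(\mathcal{C}_5)\le 2V+2$.
\item Then $3d_{sp}(\mathcal{C}_5)\le 6V+6<12V+6\le 4(i+U+V)+6$ for every $V\ge 1$, contradicting the exponent equation.
\end{enumerate}
This works uniformly, with no $(\tau,\beta)$ analysis, and does not invoke the Types 5--8 distance theorem; only this upper bound is used. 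The bound $2V\ge d_{sp}-2$ is exactly the inequality $4V\ge 2d_{sp}(\mathcal{C})-4$ on which the paper's own case analysis rests.
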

\begin{proof}
For the code $\mathcal{C}$ to reach  the MDS bound, $\mathcal{C}$ must satisfy the symbol-pair singleton bound, that is, $|\mathcal{C}| = p^{3m(4p^s - d_{sp}(\mathcal{C}) + 2)}$. Equating this theoretical bound with the known size of the code, $p^{4m(3p^s - l - U - V)}$, which yields that $p^{4m(3p^s - l - U - V)} = p^{3m(4p^s - d_{sp}(\mathcal{C}) + 2)}$. Thus \begin{align}
    &4(-l - U - V) = 3(-d_{sp}(\mathcal{C}) + 2)\\
    &4(l + U + V) = 3(d_{sp}(\mathcal{C}) - 2)  
    \end{align}
    
\textbf{Case 1:} $U=l$  then from eq (6) we get $4V=3 d_{sp}(C)-8l-6$. Here  we have
 \begin{align*}
\textrm{}~~~~~~~~~~~~ \;V &\geq p^s-p^{s-\tau}+(\beta-1)p^{s-\tau-1}+1 \\
  &= p^s+\big(p^s-p^{s-\tau}+(\beta-1)p^{s-\tau-1}+1-p^s\big) \\
  &\geq 2(\beta+1)p^\tau-p^s-1 \\
   & > d_{sp}(C)-p^s-1\\
  3 d_{sp}(C)-8l-6 &\geq 2d_{sp}(C)-4\\ 
  d_{sp}(C) &\geq 8l+2.
\end{align*}

\textbf{Subcase a:} Assume $V = l$. From the previously established  bounds, we have:

\begin{align*}
V & > d_{sp}(\mathcal{C}) - p^s - 1 \; \textrm{which implies},\\
4V &> 4d_{sp}(\mathcal{C}) - 4p^s - 4.
\end{align*}
Now substituting  $4V = d_{sp}(\mathcal{C}) - 2$ into the left side of this inequality yields:

\begin{align*}
d_{sp}(\mathcal{C}) - 2 &> 4d_{sp}(\mathcal{C}) - 4p^s - 4 \;\textrm{which implies},\\
3d_{sp}(\mathcal{C}) &< 4p^s + 2.
\end{align*}
Given that $p$ is a prime number and $l \ge 0$, this upper bound violates the necessary minimum distance constraints for this code family. Thus there does not exist an MDS code in this scenario.

\textbf{Subcase b:} Assume $V < l$. Utilizing Equation (6), we deduce that $4l > d_{sp}(\mathcal{C}) - 2$ so that  $d_{sp}(\mathcal{C}) < 4l + 2$. So we arrive at the  inequality $4l > 8l$. 

\textbf{Case 2:} Assume $U < l$. By definition of  ideals, $V \le U$; therefore, it strictly follows that $V < l$. Substituting this in Equation (6) yields $4V < 3d_{sp}(\mathcal{C}) - 8l - 6$.  Therefore  from 
$4V \ge 2d_{sp}(\mathcal{C}) - 4$, we  have that, 
\begin{align*}
3d_{sp}(\mathcal{C}) - 8l - 6 &> 2d_{sp}(\mathcal{C}) - 4, \\
\;\textrm{that is,} \;d_{sp}(\mathcal{C}) &> 8l + 2.
\end{align*}
This directly contradicts the fact  $d_{sp}(\mathcal{C}) < 4l + 2$. Indeed  $8l + 2 < 4l + 2$, that is, $4l < 0$.  Consequently, no MDS code can exist under these parameters.
\end{proof}

By applying analogous algebraic bounding techniques, it can be  verified that no non-trivial MDS symbol-pair codes exist among the $\alpha$-constacyclic codes of length $4p^s$ over $R_3$ for ideals of \textit{Type 6}, \textit{Type 7}, and \textit{Type 8}.

\section*{Conclusion}
This paper establishes the symbol-pair distance distributions of all repeated-root $\alpha$-constacyclic codes of length $4p^s$ over the finite commutative chain ring $R_3 = \mathbb{F}_{p^m}[u]/\langle u^3 \rangle$. By evaluating the symbol-pair singleton bound saturation criteria $|\mathcal{C}| = p^{3m(4p^s - d_{sp}(\mathcal{C}) + 2)}$ against the structural code cardinalities, we prove that only the trivial ideal $\mathcal{C} = \langle 1 \rangle$ achieves maximum distance separability (MDS). Conversely, non-trivial MDS symbol-pair codes do not exist within the constituent ideal classes $\mathcal{C}_i$ for $i \ge 2$. 

Future research directions include extending these classification methods to generalized polynomial residue rings $\mathbb{F}_{p^m}[u]/\langle u^k \rangle$ for higher nilpotency indices $k \geq 4$. Additionally, characterizing the distance profiles for alternative repeated-root lengths $2^a p^s$ or $v p^s$ with $\gcd(v,p)=1$ over $R_3$, alongside extensions to the generalized $b$-symbol metric framework, remain open algebraic problems.

\newpage

\begin{center}
  \textbf{Appendix}
\end{center}
Now, we present some examples of constacyclic codes of length $4p^s$ over $\mathbb{F}_{p^m}+u\mathbb{F}_{p^m}+u^2\mathbb{F}_{p^m}$ together with their symbol-pair distances.
\begin{table}[H]
\centering \caption{$\alpha$-constacyclic codes of length $20$ over
$\mathbb{F}_{5} +u \mathbb{F}_{5}+u^2 \mathbb{F}_{5}$}
\begin{longtable}{|l|c|c|}
\hline
\textbf{Ideal (C)} & $\mathbf{d_{sp}}$ & $\mathbf{|C|}$ \\
\hline

\multicolumn{3}{|c|}{Type 1} \\ \hline
$\langle 0 \rangle$ & 0 & 1 \\
$\langle 1 \rangle$ & 2 & $5^3$ \\

\hline \multicolumn{3}{|c|}{Type 2} \\ \hline
$\langle u^2 \rangle$ & 2 & $3^3$ \\
$\langle u^2(x^4-\alpha_{0}) \rangle$ & 4 & $5^2$ \\
$\langle u^2(x^4-\alpha_{0})^2 \rangle$ & 6 & $5$ \\
$\langle u^2(x^4-\alpha_{0})^3 \rangle$ & 8 & $5$ \\
$\langle u^2(x^4-\alpha_{0})^4 \rangle$ & 10 & $5$ \\

\hline \multicolumn{3}{|c|}{Type 3} \\ \hline
$\langle u \rangle$ & 2 & $5^6$ \\
$\langle u(x^4-\alpha_{0}) \rangle$ & 4 & $5^4$ \\
$\langle u(x^4-\alpha_{0})^2 \rangle$ & 6 & $5^2$ \\
$\langle u(x^4-\alpha_{0})^3 \rangle$ & 8 & $5^2$ \\
$\langle u(x^4-\alpha_{0})^2 \rangle$ & 10 & $5^2$ \\
$\langle u(x^4-\alpha_{0})^i + h(x) u^2 \rangle$ [{for i=1,2,3,4}] & 2 & $5^4$ \\
$\langle u(x^4-\alpha_{0})^i + h(x) u^2(x^4-\alpha_{0}) \rangle$ [{for i=1,2,3,4}] & 4 & $5^2$ \\
$\langle u(x^4-\alpha_{0})^i + h(x) u^2(x^4-\alpha_{0})^2 \rangle$ [{for i=2,3,4}] & 6 & $5^2$ \\
$\langle u(x^4-\alpha_{0})^i + h(x) u^2(x^4-\alpha_{0})^3 \rangle$ [{for i=3,4}] & 8 & $5^2$ \\
$\langle u(x^4-\alpha_{0})^4 + h(x)_0 u^2(x^4-\alpha_{0})^4 \rangle $ & 10 & $5^2$ \\
\hline \multicolumn{3}{|c|}{Type 4} \\ \hline
$\langle u(x^4-\alpha_{0})^i, u^2 \rangle$ [{for i=1,2,3,4}] & 2 & $5^5$ \\
$\langle u(x^4-\alpha_{0})^i,h(x) u^2(x^4-\alpha_{0}) \rangle$ [{for i=2,3,4}] & 4 & $5^4$ \\
$\langle u(x^4-\alpha_{0})^i,h(x)u^2(x^4-\alpha_{0})^2 \rangle$ [{for i=3,4}] & 6 & $5^4$ \\
$\langle u(x^4-\alpha_{0})^i+u^2(x^4-\alpha_{0}),h(x) u^2(x^4-\alpha_{0})^2 \rangle$ [{for i=3,4}] & 6 & $5^4$ \\
$\langle u(x^4-\alpha_{0})^4+u^2(x^4-\alpha_{0})^i,h(x) u^2(x^4-\alpha_{0})^3 \rangle$ [{for i=0,1,2}] & 8 & $5^4$ \\

\hline \multicolumn{3}{|c|}{Type 5} \\ \hline
$\langle (x^4-\alpha_{0})^{i}+uh_{1}(x)+u^2h_{2}(x) \rangle$ [{for i=1,2,3,4}] & 4 & $5^6$ \\
$\langle (x^4-\alpha_{0})^{2} +uh_{1}(x)+u^2h_{2}(x)\rangle$ [{for i=2,3,4}] & 4 & $5^3$ \\
$\langle (x^4-\alpha_{0}) ^{i} +u(x^4-\alpha_{0}) ^{t}h_{1}(x)+u^2h_{2}(x) \rangle _{t\epsilon \{0,1\}}$ [{for i=2,3,4}] & 4 & $5^6$ \\
$\langle (x^4-\alpha_{0})^{i} +u(x^4-\alpha_{0}) ^{t}h_{1}(x)+u^2h_{2}(x)\rangle  _{t\epsilon \{0,1,2\}}$ [{for i=3,4}]& 4 & $5^3$ \\
$\langle (x^4-\alpha_{0}) ^{i} +u(x^4-\alpha_{0}) ^{t}h_{1}(x)+u^2(x^4-\alpha_{0}) ^{z}h_{2}(x) \rangle _{t,z \epsilon \{0,1\}}$ [{for i=3,4}] & 6 & $5^6$ \\
$\langle (x^4-\alpha_{0}) ^{i} +u(x^4-\alpha_{0}) ^{t}h_{1}(x)+u^2(x^4-\alpha_{0}) ^{z}h_{2}(x) \rangle _{t \epsilon \{0,1,2\}, z\epsilon \{0,1\}}$ [{for i=3,4}] & 6 & $5^6$ \\
$\langle (x^4-\alpha_{0}) ^{i} +u(x^4-\alpha_{0}) ^{t}h_{1}(x)+u^2(x^4-\alpha_{0}) ^{z}h_{2}(x) \rangle _{t,z \epsilon \{0,1,2\}}$ [{for i=3,4}] & 8 & $5^6$ \\
$\langle (x^4-\alpha_{0})^{4} +u(x^4-\alpha_{0}) ^{t}h_{1}(x)+u^2h_{2}(x)\rangle  _{t\epsilon \{0,1,2,3\}}$ & 4 & $5^3$ \\
$\langle (x^4-\alpha_{0}) ^{4} +u(x^4-\alpha_{0}) ^{t}h_{1}(x)+u^2(x^4-\alpha_{0}) ^{z}h_{2}(x) \rangle _{t \epsilon \{0,1,2,3\}, z\epsilon \{0,1\}}$  & 6 & $5^6$ \\
$\langle (x^4-\alpha_{0}) ^{4} +u(x^4-\alpha_{0}) ^{t}h_{1}(x)+u^2(x^4-\alpha_{0}) ^{z}h_{2}(x) \rangle _{t \epsilon \{0,1,2,3\}, z\epsilon \{0,1,2\}}$  & 8 & $5^6$ \\
$\langle (x^4-\alpha_{0}) ^{4} +u(x^4-\alpha_{0}) ^{t}h_{1}(x)+u^2(x^4-\alpha_{0}) ^{z}h_{2}(x) \rangle _{t,z \epsilon \{0,1,2,3\}}$  & 10 & $5^6$\\
\hline
\end{longtable}

\end{table}

\begin{table}[H]
\centering \caption{Table1(continue) }
\begin{longtable}{|l|c|c|}\hline
\textbf{Ideal (C)} & $\mathbf{d_{sp}}$ & $\mathbf{|C|}$ \\
\hline \multicolumn{3}{|c|}{Type 6} \\ \hline
$\langle (x^4-\alpha_{0})^i, u^2 \rangle$ [{for i=1,2,3,4}] & 2 & $5^7$ \\
$\langle (x^4-\alpha_{0})^i, u^2(x^4-\alpha_{0}) \rangle$ [{for i=2,3,4}] & 4 & $5^5$ \\
$\langle (x^4-\alpha_{0})^i, u^2(x-\alpha_{0})^2 \rangle$ [{for i=3,4}] & 6 & $5^4$ \\
$\langle (x^4-\alpha_{0})^4, u^2(x-\alpha_{0})^3 \rangle$  & 8 & $5^4$ \\
$\langle (x^4-\alpha_{0})^i+a_{0}u, u^2 \rangle$ [{for i=1,2,3,4}] & 2 & $5^7$ \\
$\langle (x^4-\alpha_{0})^i+a_{0}u(x^4-\alpha_{0})^j, u^2 \rangle$ [{for i=2,3,4}, j<i] &2& $5^7$ \\
$\langle (x^4-\alpha_{0})^i+a_{0}u(x^4-\alpha_{0})^j, u^2(x^4-\alpha_{0}) \rangle$ [{for i=2,3,4}, $j<i$] & 4 & $5^7$ \\
$\langle (x^4-\alpha_{0})^i+a_{0}u(x^4-\alpha_{0})^j, u^2(x^4-\alpha_{0})^2 \rangle$ [{for i=3,4}, $j<i$] & 6 & $5^7$ \\
$\langle (x^4-\alpha_{0})^4+a_{0}u(x^4-\alpha_{0})^j, u^2(x^4-\alpha_{0})^3 \rangle$ [{for i=2,3,4}, $j<4$] & 8 & $5^7$ \\
\hline

\multicolumn{3}{|c|}{Type 7} \\ \hline
$\langle (x^4-\alpha_{0})^{i}+uh_{1}(x)+u^2h_{2}(x),u(x^4-\alpha_{0})+u^2h_{3}(x) \rangle$ [{for i=2,3,4}] & 4 & $5^6$ \\
$\langle (x^4-\alpha_{0})^{i}+uh_{1}(x)+u^2h_{2}(x),u(x^4-\alpha_{0})+u^2h_{3}(x) \rangle$ [{for i=2,3,4}] & 4 & $5^6$ \\
$\langle (x^4-\alpha_{0})^{i}+u(x^4-\alpha_{0})^{t}h_{1}(x)+u^2h_{2}(x),u(x^4-\alpha_{0})^{2}$ &&\\$+u^2h_{3}(x) \rangle _{t\epsilon \{0,1\}}$ [{for i=3,4}] & 4 & $5^6$ \\
$\langle (x^4-\alpha_{0})^{i}+u(x^4-\alpha_{0})^{t}h_{1}(x)+u^2(x^4-\alpha_{0})^{z}h_{2}(x),u(x^4-\alpha_{0})^{2}$ &&\\
$+u^2(x^4-\alpha_{0})^{j}h_{3}(x) \rangle _{t,z,j\epsilon \{0,1\}}$ [{for i=3,4}] & 6 & $5^6$ \\
$\langle (x^4-\alpha_{0})^{4}+u(x^4-\alpha_{0})^{t}h_{1}(x)+u^2h_{2}(x),u(x^4-\alpha_{0})^{3}+u^2h_{3}(x) \rangle _{t\epsilon \{0,1,2\}}$  & 4 & $5^6$ \\
$\langle (x^4-\alpha_{0})^{4}+u(x^4-\alpha_{0})^{t}h_{1}(x)+u^2(x^4-\alpha_{0})^{z}h_{2}(x),$ &&\\$u(x^4-\alpha_{0})^{3}+u^2(x^4-\alpha_{0})^{j}h_{3}(x) \rangle _{t\epsilon \{0,1,2\},z,j\epsilon \{0,1\}}$  & 6 & $5^6$ \\
$\langle (x^4-\alpha_{0})^{4}+u(x^4-\alpha_{0})^{t}h_{1}(x)+u^2(x^4-\alpha_{0})^{z}h_{2}(x),$ &&\\$u(x^4-\alpha_{0})^{3}+u^2(x^4-\alpha_{0})^{j}h_{3}(x) \rangle _{t,z,j \epsilon \{0,1,2\}}$  & 6 & $5^6$ \\

\hline \multicolumn{3}{|c|}{Type 8} \\ \hline
$\langle (x^4-\alpha_{0})^2, u(x^4-\alpha_{0}), u^2 \rangle$ & 2 & $5^6$ \\
$\langle (x^4-\alpha_{0})^3, u(x^4-\alpha_{0}), u^2 \rangle$ & 2 & $5^6$ \\
$\langle (x^4-\alpha_{0})^3, u(x^4-\alpha_{0})^2, u^2 \rangle$ & 2 & $5^6$ \\
$\langle (x^4-\alpha_{0})^3, u(x^4-\alpha_{0})^2, u^2(x^4-\alpha_{0}) \rangle$ & 4 & $5^6$ \\
$\langle (x^4-\alpha_{0})^4, u(x^4-\alpha_{0}), u^2 \rangle$ & 2 & $5^6$ \\
$\langle (x^4-\alpha_{0})^4, u(x^4-\alpha_{0})^2, u^2 \rangle$ & 2 & $5^6$ \\
$\langle (x^4-\alpha_{0})^4, u(x^4-\alpha_{0})^2, u^2(x^4-\alpha_{0}) \rangle$ & 4 & $5^6$ \\
$\langle (x^4-\alpha_{0})^4+u(x^4-\alpha_{0}), u(x^4-\alpha_{0})^2, u^2  \rangle$ & 2 & $5^6$ \\
$\langle (x^4-\alpha_{0})^4+u(x^4-\alpha_{0}), u(x^4-\alpha_{0})^2, u^2  \rangle$ & 2 & $5^6$ \\
$\langle (x^4-\alpha_{0})^4+u(x^4-\alpha_{0}), u(x^4-\alpha_{0})^2, u^2(x^4-\alpha_{0})  \rangle$ & 4 & $5^6$ \\
$\langle (x^4-\alpha_{0})^4+u(x^4-\alpha_{0})^i, u(x^4-\alpha_{0})^3, u^2  \rangle$ & 2 & $5^6$ \\
$\langle (x^4-\alpha_{0})^4+u(x^4-\alpha_{0})^i, u(x^4-\alpha_{0})^3, u^2(x^4-\alpha_{0})  \rangle$ & 4 & $5^6$ \\
$\langle (x^4-\alpha_{0})^4+u(x^4-\alpha_{0})^i, u(x^4-\alpha_{0})^3, u^2(x^4-\alpha_{0})^2  \rangle$ & 6 & $5^6$ \\
$\langle (x^4-\alpha_{0})^4+u(x^4-\alpha_{0})^{i}+u^2(x^4-\alpha_{0}), u(x^4-\alpha_{0})^{3}+u^2(x^4-\alpha_{0}), u^2(x^4-\alpha_{0})^2  \rangle$ & 6 & $5^6$ \\
\hline
\end{longtable}
\end{table}

\begin{thebibliography}{99}

\bibitem{Cassuto2011}
Y.~Cassuto, M.~Blaum. "Codes for symbol-pair read channels", \textit{IEEE Transactions on Information Theory}, 57 (2011), pp. 8011-8020.

\bibitem{Cassuto2010}
Y.~Cassuto, M.~Blaum. "Codes for symbol-pair read channels", \textit{Proceedings of IEEE International Symposium on Information Theory}, (2010), pp. 988-992.

\bibitem{Charkani2021}
M.~E.~Charkani, H.~Q.~Dinh, J.~Laaouine, W.~Yamaka. "Symbol-pair distance of repeated-root constacyclic codes of prime power lengths over $\mathbb{F}_{p^m}[u]/\langle u^3 \rangle$", \textit{Mathematics}, 9 (2021), pp. 2554.

\bibitem{Chen2013}
Y.~M.~Chee, L.~Ji, H.~M.~Kiah, C.~Wang, J.~Yin. "Maximum distance separable codes for symbol-pair read channels", \textit{IEEE Transactions on Information Theory}, 59 (2013), pp. 7259-7267.

\bibitem{Chen2016}
B.~Chen, H.~Q.~Dinh, H.~Liu, L.~Wang. "Constacyclic codes of length $2p^s$ over $\mathbb{F}_{p^m} + u\mathbb{F}_{p^m}$", \textit{Finite Fields and Their Applications}, 37 (2016), pp. 108-130.

\bibitem{Dinh2010}
H.~Q.~Dinh. "Constacyclic codes of length $p^s$ over $\mathbb{F}_{p^m} + u\mathbb{F}_{p^m}$", \textit{Journal of Algebra}, 324 (2010), pp. 940-950.

\bibitem{Dinh2017}
H.~Q.~Dinh, B.~T.~Nguyen, A.~K.~Singh, S.~Sriboonchitta. "On the symbol-pair distance of repeated-root constacyclic codes of prime power lengths", \textit{IEEE Transactions on Information Theory}, 64 (2017), pp. 2417-2430.

\bibitem{Dinh2019}
H.~Q.~Dinh, P.~Kumam, P.~Kumar, S.~Satpati, A.~K.~Singh, W.~Yamaka. "MDS symbol-pair repeated-root constacyclic codes of prime power lengths over $\mathbb{F}_{p^m} + u\mathbb{F}_{p^m}$", \textit{IEEE Access}, 7 (2019), pp. 145039-145048.

\bibitem{Dinh2019symbol}
H.~Q.~Dinh, X.~Wang, H.~Lu, S.~Sriboonchitta. "On the symbol-pair distances of repeated-root constacyclic codes of length $2p^s$", \textit{Discrete Mathematics}, 342 (2019), pp. 3062-3078.

\bibitem{Dinh4p}
Dinh, H. Q., Dhompongsa, S.,  Sriboonchitta, S. (2017). On constacyclic codes of length 4ps over Fpm+ uFpm. Discrete Mathematics, 340(4), 832-849.

\bibitem{Dinh2023symbol}
H.~Q.~Dinh, A.~K.~Singh, M.~K.~Thakur. "On symbol-pair distances of repeated-root constacyclic codes of length $2p^s$ over $\mathbb{F}_{p^m} + u\mathbb{F}_{p^m}$ and MDS symbol-pair codes", \textit{Applicable Algebra in Engineering, Communication and Computing}, 34 (2023), pp. 1027-1043.

\bibitem{Dinh2025symbol}
H.~Q.~Dinh, A.~K.~Singh, M.~K.~Thakur. "On symbol-pair distance distributions of repeated-root constacyclic codes of length $4p^s$ and MDS codes", \textit{Advances in Mathematics of Communications}, 19 (2025), pp. 1-21.

\bibitem{Laaouine2021}
J.~Laaouine, M.~E.~Charkani, L.~Wang. "Complete classification of repeated-root $\sigma$-constacyclic codes of prime power length over $\mathbb{F}_{p^m} + u\mathbb{F}_{p^m} + u^2\mathbb{F}_{p^m}$", \textit{Discrete Mathematics}, 344 (2021), pp. 112325.

\bibitem{Laaouine2025}
J.~Laaouine, H.~Q.~Dinh. "On a class of constacyclic codes of length $4p^s$ over $\mathbb{F}_{p^m}[u]/\langle u^3 \rangle$", \textit{Journal of Algebra and Its Applications}, 24 (2025), pp. 2550032.

\bibitem{Ling2004}
S.~Ling, C.~Xing. "Coding Theory: A First Course", \textit{Cambridge University Press}, Cambridge, (2004).

\bibitem{Sriwirach2021}
W.~Sriwirach, C.~Klin-Eam. "Repeated-root constacyclic codes of length $2p^s$ over $\mathbb{F}_{p^m} + u\mathbb{F}_{p^m} + u^2\mathbb{F}_{p^m}$", \textit{Cryptography and Communications}, 13 (2021), pp. 27-52.

\end{thebibliography}
\end{document}